\newtheorem{claim}{}[section]
\newtheorem{theorem}[claim]{Theorem}
\newtheorem{lemma}[claim]{Lemma}
\newtheorem{proposition}[claim]{Proposition}
\newtheorem{example}[claim]{Example}
\renewenvironment{proof}{\noindent{\it Proof. \hskip0pt}}
                      {$\square$\par\medskip}
\begin{document}
\baselineskip 6.0 truemm
\parindent 1.5 true pc

\newcommand\lan{\langle}
\newcommand\ran{\rangle}
\newcommand\tr{{\text{\rm Tr}}\,}
\newcommand\ot{\otimes}
\newcommand\ol{\overline}
\newcommand\join{\vee}
\newcommand\meet{\wedge}
\renewcommand\ker{{\text{\rm Ker}}\,}
\newcommand\image{{\text{\rm Im}}\,}
\newcommand\id{{\text{\rm id}}}
\newcommand\tp{{\text{\rm tp}}}
\newcommand\pr{\prime}
\newcommand\e{\epsilon}
\newcommand\la{\lambda}
\newcommand\inte{{\text{\rm int}}\,}
\newcommand\ttt{{\text{\rm t}}}
\newcommand\spa{{\text{\rm span}}\,}
\newcommand\conv{{\text{\rm conv}}\,}
\newcommand\rank{\ {\text{\rm rank of}}\ }
\newcommand\re{{\text{\rm Re}}\,}
\newcommand\ppt{\mathbb T}
\newcommand\rk{{\text{\rm rank}}\,}

\newcommand{\bra}[1]{\langle{#1}|}
\newcommand{\ket}[1]{|{#1}\rangle}

\title{Multi-partite separable states with unique decompositions
and construction of three qubit entanglement with positive partial transpose}

\author{Kil-Chan Ha and Seung-Hyeok Kye}
\address{Faculty of Mathematics and Statistics, Sejong University, Seoul 143-747, Korea}
\address{Department of Mathematics and Institute of Mathematics, Seoul National University, Seoul 151-742, Korea}

\thanks{KCH is partially supported by NRFK 2013-020897. SHK is partially supported by NRFK 2009-0083521.}

\subjclass{81P15, 15A30, 46L05}

\keywords{product vectors, product states, unique decomposition, simplices, general position, generalized unextendible product basis,
three qubit PPT entangled states,}

\begin{abstract}
We investigate conditions on a finite set of multi-partite product
vectors for which separable states with corresponding product states
have unique decomposition, and show that this is true in most cases
if the number of product vectors is sufficiently small. In the three
qubit case, generic five dimensional spaces give rise to faces of
the convex set consisting of all separable states, which are
affinely isomorphic to the five dimensional simplex with six
vertices. As a byproduct, we construct three qubit entangled PPT
edge states of rank four with explicit formulae. This covers those
entanglement which cannot be constructed from unextendible product basis.
\end{abstract}

\maketitle

\section{Introduction}

Theory of entanglement arising from quantum physics is now one of
the main research topics in information sciences,
and involves various fields of mathematics such as algebraic geometry, matrix theory, functional analysis
and combinatorics. Since a quantum state which is not separable is entangled,
it is important to understand the structures of separable states. We note that the set $\mathbb S$ of all separable states is a convex set, and so
it is urgent to characterize the facial structures of the convex set $\mathbb S$.

A state on the Hilbert space ${\mathcal H}=\mathbb C^{d_1}\ot \mathbb C^{d_2}\ot\cdots\ot\mathbb C^{d_n}$
is said to be separable if it is the convex combination
$$
\varrho:=\sum_{i\in I}p_i |z_i\rangle\langle z_i|
$$
of pure product states $|z_i\rangle\langle z_i|$, where
\begin{equation}\label{pv-form}
|z_i\rangle=|x_{1i}\rangle\ot |x_{2i}\rangle\ot\cdots\ot |x_{ni}\rangle\in \mathbb C^{d_1}\ot \mathbb C^{d_2}\ot\cdots\ot\mathbb C^{d_n},\qquad i\in I.
\end{equation}
Therefore, $\varrho$ is a $d\times d$ matrix, with the dimension $d=\prod_{i=1}^nd_i$ of the Hilbert space $\mathcal H$.
By definition, pure product states are extreme points of the convex set $\mathbb S$, which are $0$-dimensional simplices.
Therefore, the first step to understand the facial structures of $\mathbb S$ is to
search for faces which are affinely isomorphic to higher dimensional simplices, as it was initiated
by Alfsen and Shultz \cite{alfsen,alfsen_2} for bi-partite case with $n=2$. We call those simplicial faces.

We note that a point of a convex set determines a unique face in which it is an interior point.
It is clear that the separable state $\varrho$ determines a simplicial face of $\mathbb S$
if and only if $\varrho$ has a unique decomposition,
for which we have a simple sufficient condition: If $\{|z_i\rangle:i\in I\}$ is linearly independent
and its span has no more product vectors then it is clear that $\varrho$ has a unique decomposition. This condition has been considered by
several authors for bi-partite cases. See \cite{chen_dj_ext_PPT,cohen,kirk} for example, as well as \cite{alfsen,alfsen_2}.
The authors \cite{ha-kye-sep-face,ha-kye-2x4} utilized the fact that linear independence of
states $\{|z_i\rangle\langle z_i|:i\in I\}$ are sufficient
for this purpose, and found simplicial faces with higher dimensions for bi-qutrit and qubit-qudit cases.
This approach was also very useful to construct entangled states with positive partial transposes.
Searching for those entanglement is of independent interest in the contexts of PPT criterion \cite{choi-ppt,peres} and
distillation problem \cite{horo-distill}.

The purpose of this paper is to continue this line of research for general multi-partite cases.
We note that arbitrary choice of finitely many product vectors (\ref{pv-form})
gives those in general position with the probability one, that is, any choice of $d_j$ vectors from $j$-component
gives linearly independent vectors. Suppose that we have $k$ product vectors in general position. If
$k\le \sum_{i=1}^n (d_i-1)+1$ then we show that they are linearly independent, and if $k\le \sum_{i=1}^n (d_i-1)$
then there is no more product vectors in their span up to scalar multiplications. Therefore, the corresponding
product states make a simplicial face. This extends the result in \cite{chen_dj_ext_PPT} to multi-partite case.
In the $n$-qubit case, we see that an arbitrary choice of $n$ product vectors in general position gives us a separable state
with unique decomposition.

In the three-qubit case, we go further. We show that four product
vectors in general position give separable states with unique
decomposition if and only if they are also in general position as
product vectors in $\mathbb C^2\otimes \mathbb C^4$. Those separable
states are of rank four. As for rank five separable states, we note
that generic five dimensional spaces have exactly six product
vectors.
It turns out that these six product vectors give rise to linearly
independent pure product states, and any choice of five product vectors among them
must be linearly independent.

Therefore, they give rise to a simplicial face which
is affinely isomorphic to the five dimensional simplex $\Delta_5$
with six extreme points.
If we take an interior point $\varrho_c$
of this face then it is of rank five.
Because an interior point $\varrho_1$ in a maximal face also has rank five,
we can extend the line segment from $\varrho_c$ to $\varrho_1$ to get an entangled state with positive partial transpose (PPT).
The endpoint of this
line segment in the set of PPT states must be a PPT entangled state (PPTES) of rank four.
A standard method to construct a PPTES is to use unextendible product basis \cite{bdmsst}.
This method gives rise to general construction of a PPTES of rank four in the two qutrit case, as it was shown in \cite{chen,sko}.
But, it is far from being true in three qubit case, since generic four dimensional spaces have no product vectors.
Our construction covers PPT entangled states of rank four which cannot be constructed by unextendible product basis.
Furthermore, we give an explicit formula for those entangled states in terms of six product vectors
in generic five dimensional spaces.

In the next section, we collect basic facts on product vectors in general position and generalized unextendible product basis
and give examples for further references, even though some of them must be known to the specialists.
We consider the general multi-partite cases in Section 3, and concentrate on three qubit case in Section 4.
After we construct three qubit PPT entangled states with rank four in Section 5,
we close this paper with discussions on $n$-qubit cases and questions in the last section.
Throughout this paper,  a product vector $|x_1\rangle \ot |x_2\rangle \ot \cdots \ot |x_n\rangle$ will also be written as
$|x_1,x_2,\ldots,x_n\rangle$, and we denote by $\{|e_1\rangle,\dots,|e_d\rangle\}$  the usual basis of $\mathbb C^d$.

We are grateful to Lin Chen for useful discussion on the topics.
\section{General position and generalized unextendible product basis}\label{gp_gupb}

A finite set $\{|z_i\rangle:i\in I\}$ of product vectors in (\ref{pv-form})
is said to be in general position (GP), if  for each $j=1,2,\dots,n$ and a subset $J$ of $I$ with $|J|\le d_j$
the set $\{|x_{ji}\rangle: i\in J\}$ is linearly independent in $\mathbb C^{d_j}$, where $|J|$ denotes
the cardinality of the set $J$. On the other hand, it is called
a generalized unextendible product basis (GUPB) if the orthogonal complement of $\spa\{|z_i\rangle:i\in I\}$ has no
product vector in $\mathcal H$. It is easy to check if a given set of product vectors is a generalized unextendible product basis or not,
by the following proposition in \cite{sko}, Proposition 2.10. We include a simple proof for the convenience of the readers.

\begin{proposition}\label{gupb}
A set $\{|z_i\rangle:i\in I\}$ of product vectors in $\mathcal H$ is a generalized unextendible product basis if and only if the following
is satisfied: For any partition $I_1\cup I_2\cdots \cup I_n$ of $I$, there exists $j\in\{1,\dots,n\}$ such that the set
$\{|x_{ji}\rangle: i\in I_j\}$ spans $\mathbb C^{d_j}$.
\end{proposition}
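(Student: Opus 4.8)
The plan is to translate the spanning condition into combinatorics of partitions by way of the factorization of inner products between product vectors. I would write an arbitrary product vector of $\mathcal H$ as $|w\rangle = |y_1\rangle \ot \cdots \ot |y_n\rangle$ with each $|y_j\rangle \in \mathbb C^{d_j}$; such a vector is nonzero exactly when every factor $|y_j\rangle$ is nonzero. Since
$$
\langle z_i | w\rangle = \prod_{j=1}^n \langle x_{ji} | y_j\rangle,
$$
the orthogonality $\langle z_i | w\rangle = 0$ holds if and only if $\langle x_{ji} | y_j\rangle = 0$ for at least one index $j$. Hence $|w\rangle$ lies in the orthogonal complement of $\spa\{|z_i\rangle : i \in I\}$ precisely when each $i \in I$ can be assigned an index $j(i)$ with $\langle x_{j(i),i} | y_{j(i)}\rangle = 0$. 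This identity is the bridge that reduces both implications to manipulating such assignments, and I would establish each direction by contraposition.

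For the ``if'' part I assume the spanning condition fails for some partition and produce a product vector in the orthogonal complement. Suppose there is a partition $I = I_1 \cup \cdots \cup I_n$ such that, for every $j$, the set $\{|x_{ji}\rangle : i \in I_j\}$ does \emph{not} span $\mathbb C^{d_j}$. Then each of these sets has nontrivial orthogonal complement, so I may pick a nonzero $|y_j\rangle \in \mathbb C^{d_j}$ orthogonal to every $|x_{ji}\rangle$ with $i \in I_j$ (when $I_j = \emptyset$ any nonzero $|y_j\rangle$ will do). The resulting $|w\rangle = |y_1\rangle \ot \cdots \ot |y_n\rangle$ is nonzero, and for each $i$, since $i$ lies in exactly one block, say $I_j$, the factor $\langle x_{ji} | y_j\rangle$ of $\langle z_i | w\rangle$ vanishes. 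Thus $|w\rangle$ is a product vector in the orthogonal complement, so the set is not a generalized unextendible product basis.

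For the ``only if'' part I assume the set is not a generalized unextendible product basis and produce a partition violating the spanning condition. By hypothesis there is a nonzero product vector $|w\rangle = |y_1\rangle \ot \cdots \ot |y_n\rangle$ with $\langle z_i | w\rangle = 0$ for all $i$, and in particular each $|y_j\rangle \ne 0$. Using the factorization, for each $i$ I choose one index $j(i)$ with $\langle x_{j(i),i} | y_{j(i)}\rangle = 0$ and set $I_j = \{ i \in I : j(i) = j\}$, which is a genuine partition of $I$. For each $j$, every $|x_{ji}\rangle$ with $i \in I_j$ is orthogonal to the fixed nonzero vector $|y_j\rangle$, so $\spa\{|x_{ji}\rangle : i \in I_j\}$ is contained in the proper subspace $|y_j\rangle^\perp$ and therefore fails to span $\mathbb C^{d_j}$, which is exactly the desired violation.

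The only delicate points are bookkeeping rather than substance: I must permit empty blocks $I_j$ in the partition (handled above by taking an arbitrary nonzero $|y_j\rangle$), and when several factors of $\langle z_i | w\rangle$ vanish I must commit to a single index $j(i)$ so that the blocks $I_j$ are disjoint. Once these choices are fixed, the equivalence follows directly from the factorization identity, and no further estimates are needed.
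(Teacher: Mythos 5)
Your proposal is correct and takes essentially the same route as the paper: both directions are argued by contraposition, building a product vector in the orthogonal complement from a partition violating the spanning condition, and conversely extracting a violating partition from the vanishing factors of $\langle z_i|w\rangle$ (your explicit treatment of empty blocks is in fact slightly more careful than the paper's). The only slip is terminological: the direction you label ``if'' (failure of the partition condition implies not a generalized unextendible product basis) is the contrapositive of the ``only if'' part, and vice versa, so the two labels should be interchanged.
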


\begin{proof}
Let $|z_i\rangle$ be of the form $|z_i\rangle=|x_{1i},x_{2i},\ldots, x_{ni}\rangle$ as in \eqref{pv-form}.
For the \lq only if\rq\ part, suppose that there exists a partition $I_1\cup\cdots \cup I_n$ of $I$ such that
$\{|x_{ji}\rangle:i\in I_j\}$ does not span
$\mathbb C^{d_j}$ for any $j=1,2,\dots,n$. We take $|y_j\rangle\in \mathbb C^{d_j}$ orthogonal to the span of $\{|x_{ji}\rangle:i\in I_j\}$
for each $j=1,\dots,n$, then
we see that $|y_1,\,\ldots,\,y_n\rangle$ is orthogonal to the span of $\{|z_i\rangle:i\in I\}$.

For the \lq if\rq\ part, assume  that
there exists a product vector $|y_1,\ldots, y_n\rangle$ which is orthogonal to $|z_i\rangle$ for each $i\in I$, and put
$$
I_j=\{i\in I: \langle x_{ji}| y_j\rangle=0\},\qquad j=1,2,\dots,n.
$$
The assumption implies $I=I_1\cup\cdots\cup I_n$. We can take a nonempty subset $\hat I_j$ of $I_j$
for each $j=1,\dots,n$ so that
$I=\hat I_1\cup\cdots\cup \hat I_n$ is a partition of $I$. It is clear that
no $\{|x_{ji}\rangle: i\in \hat I_j\}$ spans $\mathbb C^{d_j}$.
\end{proof}

Therefore, the minimum number $|I|$ of product vectors to be a generalized unextendible product basis is given by
$\sum_{i=1}^n(d_i-1)+1$. In fact, it is known \cite{walgate} that the maximum dimension of subspaces without product vectors is given by
\begin{equation}\label{s_max}
s_{\max}=\prod_{i=1}^n d_i-\left(\sum_{i=1}^n(d_i-1)+1\right),
\end{equation}
and generic $s_{\max}+1$ dimensional spaces have ${\bigl(\sum_i(d_i-1)\bigr)!}/{\bigl(\prod_i(d_i-1)!\bigr)}$ product vectors. See also
\cite{juhan}. If $|I|\ge \sum_{i=1}^n(d_i-1)+1$, then it is clear that product vectors in general position
must be a generalized unextendible product basis. The converse also holds if
$|I|=\sum_{i=1}^n(d_i-1)+1$, as it was observed in \cite{sko}, Proposition 2.4, for the bi-partite case.

\begin{proposition}\label{gp}
A set $\{|z_i\rangle:i\in I\}$  of product vectors in $\mathcal H$ with $|I|=\sum_{i=1}^n(d_i-1)+1$ is in general position if and only if
it is a generalized unextendible product basis.
\end{proposition}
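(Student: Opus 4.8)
The plan is to prove both implications, noting that the forward direction is essentially the counting argument already flagged as clear in the text preceding the statement, while the real content lies in the converse. Throughout I write $m:=\sum_{i=1}^n(d_i-1)+1=|I|$ and I use the partition criterion of Proposition \ref{gupb} as the working characterization of a GUPB.

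For the \lq only if\rq\ direction (general position $\Rightarrow$ GUPB) I would argue against Proposition \ref{gupb} by contradiction. Suppose $I=I_1\cup\cdots\cup I_n$ is a partition for which no $\{|x_{ji}\rangle:i\in I_j\}$ spans $\mathbb C^{d_j}$. Under general position any $d_j$ of the vectors $|x_{ji}\rangle$ are linearly independent and hence span $\mathbb C^{d_j}$, so failure to span forces $|I_j|\le d_j-1$ for each $j$. Summing gives $|I|=\sum_j|I_j|\le\sum_j(d_j-1)=m-1<m$, a contradiction. Hence some component spans and the set is a GUPB.

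The substantial part is the \lq if\rq\ direction (GUPB $\Rightarrow$ general position), which I would prove by contraposition: assuming the set fails to be in general position, I will build a partition violating Proposition \ref{gupb}. Failure of general position yields a coordinate, say $j=1$ after relabeling, and a subset $J\subseteq I$ with $|J|\le d_1$ such that $\{|x_{1i}\rangle:i\in J\}$ is linearly dependent, so that $\dim\spa\{|x_{1i}\rangle:i\in J\}\le|J|-1$. The one place where the exact cardinality $|I|=m$ enters is the count $|I\setminus J|=m-|J|=(d_1-|J|)+\sum_{j=2}^n(d_j-1)$. I would therefore split $I\setminus J$ into sets $B,A_2,\dots,A_n$ with $|B|=d_1-|J|\ge0$ and $|A_j|=d_j-1$, and set $I_1:=J\cup B$ together with $I_j:=A_j$ for $j\ge2$. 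Then $|I_j|=d_j-1<d_j$ for each $j\ge2$, so those components cannot span their factors.

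The delicate point, which I expect to be the main obstacle, is showing that the padded first component $I_1$ still fails to span $\mathbb C^{d_1}$ even though $|I_1|=|J|+|B|=d_1$. This is precisely where the \emph{dependence} of $J$, a deficiency of one dimension, is spent: adjoining the $d_1-|J|$ vectors indexed by $B$ raises the dimension by at most $d_1-|J|$, so $\dim\spa\{|x_{1i}\rangle:i\in I_1\}\le(|J|-1)+(d_1-|J|)=d_1-1<d_1$. Thus no component of the partition $I_1\cup\cdots\cup I_n$ spans its factor, and Proposition \ref{gupb} shows the set is not a GUPB. This establishes the contrapositive and completes the equivalence.
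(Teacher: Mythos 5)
Your proof is correct and takes essentially the same route as the paper: both arguments hinge on the same partition, obtained by padding the suspect subset $J$ to a component of exactly $d_1$ indices and filling every other component with exactly $d_\ell-1$ indices, and then invoking Proposition \ref{gupb}. The only difference is direction of reading — you prove the converse by contraposition (a dependent $J$ keeps the padded component from spanning, so no component spans), while the paper argues directly (the GUPB property forces the padded component to span, hence its $d_1$ vectors are independent, hence so is $J$) — which is the same argument run backwards.
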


\begin{proof}
It remains to prove the \lq if\rq\ part.
Suppose that $J\subset I$ with $|J|\le d_{j}$. Then we can take a partition $I_1\cup\cdots \cup I_n$ of $I$
satisfying
$$
I_j\supset J,\qquad |I_j|=d_j,\qquad |I_\ell|=d_\ell-1\ \ \text{\rm for}\ \ell=1,\dots,j-1,j+1,\dots,n.
$$
Then proposition \ref{gupb} implies that $\{|x_{ji}\rangle:i\in I_j\}$ span $\mathbb C^{d_j}$, and so we conclude that
$\{|x_{ji}\rangle:i\in J\}$ is linearly independent in $\mathbb C^{d_j}$.
\end{proof}

\newcommand\zz{\,\,}

\begin{example}\label{exam-a}
Recall that a generalized unextendible product basis is said to be just an unextendible product basis if they are orthogonal.
An example of three qubit unextendible product basis is given in \cite{bdmsst}:
$$
\begin{array}{rrrrrrrrrl}
|z_1\rangle=( \hspace{-0.3truecm}& \cdot\ ,  \hspace{-0.2truecm}& \cdot\ , \hspace{-0.2truecm}& +, \hspace{-0.2truecm}&
 +,\hspace{-0.2truecm}& \cdot\ , \hspace{-0.2truecm}& \cdot\ , \hspace{-0.2truecm}&
  \cdot\ , \hspace{-0.2truecm}& \cdot \ & \hspace{-0.3truecm})^{\rm t}=|e_1\rangle\ot |e_2\rangle\ot (|e_1\rangle+|e_2\rangle)\\
|z_2\rangle=( \hspace{-0.3truecm}& \cdot\ , \hspace{-0.2truecm}& \cdot\ , \hspace{-0.2truecm}&
 \cdot\ , \hspace{-0.2truecm}& \cdot\ , \hspace{-0.2truecm}& +, \hspace{-0.2truecm}&
  \cdot\ , \hspace{-0.2truecm}& +, \hspace{-0.2truecm}& \cdot \ & \hspace{-0.3truecm})^{\rm t}=|e_2\rangle\ot (|e_1\rangle+|e_2\rangle)\ot |e_1\rangle\\
|z_3\rangle=( \hspace{-0.3truecm}& \cdot\ , \hspace{-0.2truecm}& +, \hspace{-0.2truecm}&
 \cdot\ , \hspace{-0.2truecm}& \cdot\ , \hspace{-0.2truecm}& \cdot\ , \hspace{-0.2truecm}&
  +, \hspace{-0.2truecm}& \cdot\ , \hspace{-0.2truecm}& \cdot\  & \hspace{-0.3truecm})^{\rm t}=(|e_1\rangle+|e_2\rangle)\ot |e_1\rangle\ot |e_2\rangle\\
|z_4\rangle=( \hspace{-0.3truecm} & +\,, \hspace{-0.2truecm}& -, \hspace{-0.2truecm}&
 -, \hspace{-0.2truecm}& +, \hspace{-0.2truecm}& -, \hspace{-0.2truecm}&
  +,\hspace{-0.2truecm} & +, \hspace{-0.2truecm}& - & \hspace{-0.3truecm})^{\rm t}
  =(|e_1\rangle-|e_2\rangle)\ot (|e_1\rangle-|e_2\rangle)\ot (|e_1\rangle-|e_2\rangle),\\
\end{array}
$$
where, $+,-,\cdot$ denote $+1, -1, 0$, respectively.
We also identify $\mathbb C^2\ot\mathbb C^2\ot\mathbb C^2$ with $\mathbb C^8$
with the basis in the lexicographic order:
$$
\begin{aligned}
|e_1,\, e_1,\, e_1\rangle,\quad
&|e_1,\, e_1,\, e_2\rangle,\quad
|e_1,\, e_2,\, e_1\rangle,\quad
|e_1,\, e_2,\, e_2\rangle,\\
&|e_2,\, e_1,\, e_1\rangle,\quad
|e_2,\, e_1,\, e_2\rangle,\quad
|e_2,\, e_2,\, e_1\rangle,\quad
|e_2,\, e_2,\, e_2\rangle.
\end{aligned}
$$
Note that the orthogonal complement is spanned by
$$
\begin{array}{lrrrrrrr}
|w_1\rangle=(\ \, \cdot\ ,&\cdot\ ,&1\, ,\hspace{-0.2truecm}&-1\, ,
 &\cdot\ ,\hspace{-0.2truecm}&\cdot\ ,\hspace{-0.2truecm}&\cdot\ ,\hspace{-0.2truecm}&-2\, )^{\rm t}\\
|w_2\rangle=(\ \, \cdot\ ,&\cdot\ ,&\cdot\ ,\hspace{-0.2truecm}
 &\cdot\ , &1\, ,\hspace{-0.2truecm} &\cdot\ ,\hspace{-0.2truecm}&-1\, ,\hspace{-0.2truecm} &-2\, )^{\rm t}\\
|w_3\rangle=(\ \, \cdot\ , &1\, , &\cdot\ ,\hspace{-0.2truecm}
 &\cdot\ ,  &\cdot\ ,\hspace{-0.2truecm} &-1,\hspace{-0.2truecm} &\cdot\ ,\hspace{-0.2truecm} &-2\, )^{\rm t}\\
|w_4\rangle=(\ \, 1\, , &\cdot\ ,&\cdot\ ,\hspace{-0.2truecm}
 &\cdot\ ,&\cdot\ ,\hspace{-0.2truecm}&\cdot\ ,\hspace{-0.2truecm}&\cdot\ ,\hspace{-0.2truecm} &1\, )^{\rm t}.
\end{array}
$$
The orthogonal complement of the three vectors $\{|w_1\rangle,\, |w_2\rangle,\, |w_3\rangle\}$ has exactly six product vectors
$\{|z_1\rangle,\dots, |z_5\rangle,|z_6\rangle\}$, where
$$
\begin{array}{rcccccccl}
|z_5\rangle=(\ 8\, ,&\hspace{-0.2truecm}4\, ,&\hspace{-0.2truecm}4\, ,&\hspace{-0.2truecm}2\, ,&\hspace{-0.2truecm}4\, ,&\hspace{-0.2truecm}2\, ,&\hspace{-0.2truecm}2\, ,&\hspace{-0.2truecm}1\ )&\hspace{-0.2truecm}=(2|e_1\rangle+|e_2\rangle)\ot (2|e_1\rangle+|e_2\rangle)\ot (2|e_1\rangle+|e_2\rangle),\\
|z_6\rangle=(\ 1\, ,&\hspace{-0.2truecm}\cdot\hspace{0.1truecm} ,&\hspace{-0.2truecm}\cdot\hspace{0.1truecm} ,&\hspace{-0.2truecm}\cdot\hspace{0.1truecm},&\hspace{-0.2truecm}\cdot\hspace{0.1truecm},
&\hspace{-0.2truecm}\cdot\hspace{0.1truecm},&\hspace{-0.2truecm}\cdot\hspace{0.1truecm},
&\hspace{-0.2truecm}\cdot\hspace{0.1truecm}\ )&\hspace{-0.2truecm}=|e_1\rangle\ot |e_1\rangle\ot |e_1\rangle.
\end{array}
$$
We summarize properties for a subset $\mathcal S$ of $\{|z_1\rangle,\dots,|z_6\rangle\}$ as follows:

\medskip
\begin{center}
\begin{tabular}{|c|c|c|c|}
  \hline
$|\mathcal S|$ & $|z_6\rangle\in \mathcal S$ & GP & GUBP\\
  \hline
  \hline
6 & Yes & No & Yes\\
\hline
5 & No & Yes & Yes\\
\hline
5& Yes&No&Yes\\
\hline
4&No&Yes&Yes\\
\hline
4&Yes&No&No\\
\hline
\end{tabular}
\end{center}
\end{example}
\medskip

\section{Linear Independence of Product Vectors and Product States}

In this section, we consider $k$ product vectors in (\ref{pv-form}) in general position, and
deal with the question to what extent they make separable states with unique decomposition.
We begin with the question of linear independence of product vectors themselves.

\begin{proposition}\label{prop1}
Suppose that $k$ product vectors in $\mathbb C^{d_1}\ot \mathbb C^{d_2}\ot\cdots\ot\mathbb C^{d_n}$ are in general position.
If $k\le \sum_{i=1}^n (d_i-1)+1$ then they are linearly independent.
\end{proposition}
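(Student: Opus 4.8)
The plan is to argue by induction on the number $n$ of tensor factors, peeling off the first component by means of a carefully chosen linear functional. Write $|z_i\rangle=|x_{1i}\rangle\ot|w_i\rangle$ with $|w_i\rangle=|x_{2i},\dots,x_{ni}\rangle\in\mathbb C^{d_2}\ot\cdots\ot\mathbb C^{d_n}$, and note that the projected family $\{|w_i\rangle\}$ is again in general position as $(n-1)$-partite product vectors, since the defining condition for the components $j=2,\dots,n$ is inherited verbatim; likewise general position is preserved under passing to any subset of $I$. For the base case $n=1$ the hypothesis reads $k\le d_1$, and general position (taking $J=I$) says precisely that the $k$ vectors in $\mathbb C^{d_1}$ are linearly independent.

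For the inductive step, suppose $\sum_{i\in I}c_i|z_i\rangle=0$ and, aiming at a contradiction, assume the coefficients are not all zero. Restricting to the support $I_0=\{i:c_i\neq 0\}$, which is still in general position, I may assume $c_i\neq 0$ for every $i$ and set $k'=|I_0|\le k$. If $k'\le d_1$, then the first components $\{|x_{1i}\rangle:i\in I_0\}$ are linearly independent by general position; choosing the dual functionals $\bra{f_i}$ with $\langle f_i|x_{1i'}\rangle=\delta_{ii'}$ on their span and applying $\bra{f_i}\ot\id$ to the relation yields $c_i|w_i\rangle=0$, hence $c_i=0$, a contradiction.

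The remaining case $k'>d_1$ is the crux, because then the reduced vectors $|w_i\rangle$ may themselves be linearly dependent, so the naive induction fails; this is where the functional trick does the work. Fix any $i_0\in I_0$ and choose $d_1-1$ other indices of $I_0$; since their first components are $d_1-1$ independent vectors of $\mathbb C^{d_1}$, there is a nonzero $\bra{f}$ annihilating all of them, and general position forces $\langle f|x_{1i}\rangle\neq 0$ for every remaining index (otherwise some $d_1$ first components would be dependent). Applying $\bra{f}\ot\id$ therefore leaves a relation $\sum c_i\langle f|x_{1i}\rangle|w_i\rangle=0$ over the $k'-(d_1-1)$ surviving indices, and the key arithmetic is that $k'-(d_1-1)\le\sum_{i=2}^n(d_i-1)+1$, exactly the threshold for $(n-1)$ parties, so the surviving $|w_i\rangle$ are linearly independent by the induction hypothesis. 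This forces $c_{i_0}\langle f|x_{1i_0}\rangle=0$, whence $c_{i_0}=0$, contradicting $i_0\in I_0$. I expect the only delicate points to be the bookkeeping that general position is inherited by both the projection and the subsets, and the verification that the chosen $\bra{f}$ does not accidentally annihilate a surviving first component — both of which follow directly from the general position hypothesis.
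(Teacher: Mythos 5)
Your proof is correct and follows essentially the same route as the paper's: induction on the number of tensor factors, contracting away one factor with a functional chosen orthogonal to $d_j-1$ of the components in that factor so that general position forces all remaining pairings to be nonzero, combined with the arithmetic $k-(d_j-1)\le \sum_{i}(d_i-1)+1$ for the reduced $(n-1)$-partite system. The differences are only organizational: you peel the first factor rather than the last, argue by contradiction one coefficient at a time rather than eliminating the tail and head coefficients in two stages, and start the induction at $n=1$, which makes the argument self-contained where the paper instead cites Proposition 2.1 of \cite{ha-kye-sep-face} for its base case $n=2$.
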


\begin{proof}
We use mathematical induction on $n$. By Proposition 2.1 in \cite{ha-kye-sep-face},
we know that the claim is true for $n=2$. Now, we show that if the claim holds for $n=N$,
then it also holds for $n=N+1$.  Let $1\le k\le \sum_{i=1}^{N+1} (d_i-1)+1$.
Suppose that $k$ product vectors
$$
|x_{1i},\,\ldots,\,x_{Ni},\,y_i\rangle,\quad (1\le i\le k)
$$
in $\mathbb C^{d_1}\ot\cdots\ot\mathbb C^{d_N}\ot\mathbb C^{d_{N+1}}$
are in general position, and have the relation
$$
\sum_{i=1}^k c_i |x_{1i},\, \ldots,\, x_{Ni},\, y_i\rangle =0,
$$
for scalars $c_1,\dots, c_k$.  So, we may assume that $k>d_{N+1}$. For any vector $|\omega\rangle$ in the orthogonal complement of the space
$\text{ \rm span}\{|y_i\rangle: 1\leq i \le d_{N+1}-1\}\subset \mathbb C^{d_{N+1}}$, we see that
\begin{equation}\label{cond1}
\langle \omega |y_i\rangle =0 \ \ (1\le i \le d_{N+1}-1),\qquad
\langle \omega |y_i\rangle \neq 0 \ \ (d_{N+1}\le i \le k),
\end{equation}
since product vectors are in general position. Then we have
\[
\sum_{i=d_{N+1}}^k c_i \langle w |y_i\rangle |x_{1i},\,\ldots,\,x_{Ni}\rangle =0.
\]
These $k-d_{N+1}+1$ product vectors $|x_{1i},\, \ldots, \, x_{Ni}\rangle$ $(d_{N+1}\le i \le k)$ are
linearly independent by the induction hypothesis, because $k-d_{N+1}+1\le \sum_{i=1}^N (d_i-1)+1$. Therefore, we see that
$c_i\langle \omega|y_i\rangle =0$, and so $c_i=0$ for each $i=d_{N+1},\cdots, k$ by \eqref{cond1}.
This implies that
$\sum_{i=1}^{d_{N+1}-1} c_i |x_{1i},\,\ldots,\,x_{Ni},\,y_i\rangle =0$,
and thus we see that $c_i=0$ for $1\le i \le d_{N+1}-1$ since $d_{N+1}-1$ vectors
$\{ |y_i\rangle : 1\le i \le d_{N+1}-1\}$ are linearly independent in $\mathbb C^{d_{N+1}}$. Consequently,
we get $c_i=0$ whenever $1\le i \le k$. This completes the proof.
\end{proof}

The number $\sum_{i=1}^n (d_i-1)+1$ in Proposition \ref{prop1} is optimal in two qutrit case, since
generic five dimensional spaces have six product vectors. It is also optimal in qubit-qudit case
by the construction in \cite{moment}.
Actually, tensor products of $(1,t)^{\rm t}\in\mathbb C^2$ and $(1,t,\dots, t^{d-1})^{\rm t}\in\mathbb C^d$ with $t\in\mathbb R$
span $(d+1)$-dimensional space. In $n$-qubit case, we consider $n$-times tensor product of
$(1,t)^{\rm t}\in\mathbb C^2$ which span $(n+1)$-dimensional space, to see that this number is also optimal.

The following theorem tells us that if $k\le \sum_{i=1}^n (d_i-1) $ then $k$ product vectors in general position
give rise to a simplicial face. The number $\sum_{i=1}^n (d_i-1)$ is also optimal in the above cases by the same examples.

\begin{theorem}\label{gp_simplex}
Suppose that $k$ product vectors in $\mathbb C^{d_1}\ot \mathbb C^{d_2}\ot\cdots\ot\mathbb C^{d_n}$ are in general position.
If $k\le \sum_{i=1}^n (d_i-1) $ then the span of these product vectors
has no more product vectors except for scalar multiples of these product vectors.
\end{theorem}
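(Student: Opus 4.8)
The plan is to argue by contradiction: assuming some product vector in the span is not a scalar multiple of any $|z_i\rangle$, I would manufacture a linear dependence among a small family of lower-partite product vectors that are still in general position, contradicting Proposition \ref{prop1}.

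First, since $k\le\sum_i(d_i-1)\le\sum_i(d_i-1)+1$, Proposition \ref{prop1} shows the given product vectors are linearly independent, so any product vector $|w\rangle=|u_1,\dots,u_n\rangle$ in their span has a unique expansion $|w\rangle=\sum_{i\in S}c_i|z_i\rangle$ with all $c_i\neq0$, where $S$ is the support. If $|S|\le1$ we are done, so assume $m:=|S|\ge2$ and, after relabelling, take $S=\{1,\dots,m\}$. The subfamily $\{|z_i\rangle:i\in S\}$ is still in general position and $m\le k\le\sum_i(d_i-1)$.

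The key device is to pair a one-sided inner product into a single tensor leg in order to annihilate most terms. Fixing a component $j$ and a vector $|\omega\rangle\in\mathbb C^{d_j}$ with $\langle\omega|u_j\rangle=0$, pairing $\langle\omega|$ into the $j$th factor of $|w\rangle=\sum_{i\in S}c_i|z_i\rangle$ gives
\[
0=\sum_{i\in S}c_i\,\langle\omega|x_{ji}\rangle\,|z_i^{(\hat j)}\rangle ,
\]
where $|z_i^{(\hat j)}\rangle$ is the $(n-1)$-partite product vector obtained by deleting the $j$th factor. These survivors inherit general position, so if their number is at most $\sum_{\ell\neq j}(d_\ell-1)+1$ then Proposition \ref{prop1} forces $c_i\langle\omega|x_{ji}\rangle=0$ on the surviving index set. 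Choosing $|\omega\rangle$ so that this set is nonempty and consists exactly of indices with $\langle\omega|x_{ji}\rangle\neq0$ then contradicts $c_i\neq0$. Thus the whole proof reduces to producing an $|\omega\rangle$ that is orthogonal to $|u_j\rangle$, kills enough of the $|x_{ji}\rangle$ to keep the surviving count below $\sum_{\ell\neq j}(d_\ell-1)+1$, yet does not kill all of them.

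The main obstacle is precisely this simultaneous control of the surviving index set: one needs it nonempty and small, and the naive choice of $|\omega\rangle$ can fail the count when every local dimension exceeds $m$. To overcome this I first record that $|u_j\rangle\in W_j:=\spa\{|x_{ji}\rangle:i\in S\}$ for every $j$, since pairing any $|\omega\rangle\perp W_j$ into the $j$th leg annihilates the entire sum and hence forces $\langle\omega|u_j\rangle=0$ (the remaining factors being nonzero). Therefore all $|z_i\rangle$ $(i\in S)$ and $|w\rangle$ live in $W_1\ot\cdots\ot W_n$, and I may replace each $\mathbb C^{d_j}$ by $W_j$, of dimension $d_j'=\min(m,d_j)\le m$; general position is inherited, and a short check using $d_j'\ge2$ and $n\ge2$ shows the bound $m\le\sum_j(d_j'-1)$ survives the reduction. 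After this each local space is spanned by $\{|x_{ji}\rangle:i\in S\}$, so for any fixed $j$ I can pick $B\subseteq S$ with $|B|=d_j'-2$ and a nonzero $|\omega\rangle\in W_j$ orthogonal to $|u_j\rangle$ and to $\{|x_{ji}\rangle:i\in B\}$ (at most $d_j'-1$ constraints in a $d_j'$-dimensional space). Since $\{|x_{ji}\rangle:i\in S\}$ spans $W_j$, such an $|\omega\rangle$ cannot annihilate every $|x_{ji}\rangle$, so the surviving set $T\subseteq S\setminus B$ is nonempty with $|T|\le m-(d_j'-2)\le\sum_{\ell\neq j}(d_\ell'-1)+1$, and the reduction argument above then closes. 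The residual routine points — inheritance of general position under restriction to the $W_j$ and the dimension inequality $m\le\sum_j(d_j'-1)$ — I would verify directly.
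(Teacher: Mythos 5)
Your proof is correct, and it takes a genuinely different route from the paper's. The paper proves Theorem \ref{gp_simplex} by induction on the number $n$ of parties, importing the bipartite base case $n=2$ from Lemma 29 of \cite{chen_dj_ext_PPT}: writing a candidate product vector as a combination of $K$ of the $|z_i\rangle$, it first shows by a case analysis ($K\le d_{N+1}$ versus $K>d_{N+1}$) that the first $N$ factors agree up to scalar with some $|x_{1i},\ldots,x_{Ni}\rangle$, and then by a second case analysis ($K\le d_1$ versus $K>d_1$) that all remaining coefficients vanish. You instead give a single contradiction argument, uniform in $n$, whose only external input is Proposition \ref{prop1}; its key novelty is the reduction to the local spans $W_j$ (justified by your observation that $|u_j\rangle\in W_j$), which is precisely what makes the annihilation-plus-counting step work when the local dimensions $d_j$ are large compared with the support size $m$. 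Your two deferred checks do go through: inheritance of general position under passing to a subfamily, deleting a leg, and restricting to the $W_j$ is immediate from $d_j'\le d_j$; and the inequality $m\le\sum_j(d_j'-1)$ with $d_j'=\min(m,d_j)$ follows from a three-case check --- if at least two indices have $d_j\ge m$ then $\sum_j(d_j'-1)\ge 2(m-1)\ge m$; if exactly one does, the sum is at least $(m-1)+1$ since each remaining $d_\ell'\ge 2$; if none does, then $\sum_j(d_j'-1)=\sum_j(d_j-1)\ge k\ge m$, which is the only place the hypothesis $k\le\sum_j(d_j-1)$ enters. Note also that when $n=2$ your reduced family after deleting a leg is single-partite, where Proposition \ref{prop1} degenerates to the trivial statement that at most $d'$ vectors in general position in a $d'$-dimensional space are linearly independent; so your argument actually reproves the bipartite case rather than citing it. What each approach buys: yours is self-contained and handles all $n$ at once, at the cost of the case analysis over how many local dimensions exceed $m$; the paper's induction avoids that case analysis and yields the intermediate structural statement (3.4) about the first $N$ factors along the way.
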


\begin{proof}
We also use induction on $n$. The assertion is true for $n=2$ by
Lemma 29 in \cite{chen_dj_ext_PPT}. We show that if the assertion is
true for $n=N$, then it is also true for $n=N+1$. We denote by $\mathcal
V$ the subspace of
$\mathbb C^{d_1}\otimes \cdots \otimes \mathbb C^{d_N}\otimes \mathbb C^{d_{N+1}}$
generated by $k$ product vectors $|x_{1i},\, \ldots,\,x_{Ni},\,y_{i}\rangle$ in general position with $k \le
\sum_{i=1}^{N+1}(d_i -1)$. Take a product
vector $|a_1,\,\ldots,\,a_N,\, b\rangle$ in the space $\mathcal V$, and write
\begin{equation}\label{pv}
|a_1,\,\ldots,\,a_N,\,b\rangle =
\sum_{i=1}^K c_i |x_{1i},\,\ldots,\,x_{Ni},\,y_i\rangle
\end{equation}
for scalars $c_i\in \mathbb C$ with $K\le k$, by
rearrangement of product vectors. We proceed to show that
$|a_1,\, \ldots,\, a_N,\,b\rangle
\in \mathcal V$ is a scalar multiple of $|x_{1i},\,\ldots,\,x_{Ni},\, y_i\rangle$, for some $i$
with $1\le i\le K$. The first step is to show the following:
\begin{equation}\label{sss}
\begin{aligned}
&|a_1,\,\ldots,\,a_N\rangle\ \text{\rm is a scalar
multiplication of a product vector}\\
&\phantom{XXXX}|x_{1i},\,\ldots,\,x_{Ni}\rangle\
\text{\rm for some}\ i=1,2,\dots,K.
\end{aligned}
\end{equation}

To do this, we first consider the case $K\le d_{N+1}$. In this case,
$|y_1\rangle, \dots, |y_K\rangle$ are linearly independent since
$|x_{1i},\,\ldots,\,x_{Ni},\,y_i\rangle$'s are in general position. We note that
$$
\left(\prod_{j=1}^N \langle a_j|a_j\rangle \right)|b\rangle =
\sum_{i=1}^K \left (c_i \prod_{j=1}^N \langle a_j|x_{ji}\rangle
\right) |y_i\rangle,
$$
and so, we see that $|b\rangle =\sum_{i=1}^K \xi_i |y_i\rangle$ with $\xi_i\in \mathbb C$. By (\ref{pv}), we have
\[
\sum_{i=1}^K \Bigl(\xi_i |a_1,\,\ldots,\,a_N\rangle -c_i |x_{1i},\,\ldots,\,x_{Ni}\rangle \Bigr)\otimes |y_i\rangle =0.
\]
From the linear independence of $|y_1\rangle, \dots, |y_K\rangle$,
we see that
$$
\xi_i |a_1,\,\ldots,\,a_N\rangle -c_i |x_{1i},\,\ldots,\,x_{Ni}\rangle=0
$$
for each $i=1,\dots, K$. Because not all of $\xi_i$'s are zero, the
assertion (\ref{sss}) follows.

Now, we consider the case $K> d_{N+1}$ to prove (\ref{sss}). Since $\{|y_1\rangle,\dots,
|y_{d_{N+1}}\rangle\}$ is a basis of $\mathbb C^{d_{N+1}}$, we have
$$
|y_i\rangle =\sum_{j=1}^{d_{N+1}} \eta_{ij}|y_j\rangle,\ \
(d_{N+1}+1\le i \le K), \qquad |b\rangle =\sum_{j=1}^{d_{N+1}} \zeta_j
|y_j\rangle,
$$
for scalars $\eta_{ij},\,\zeta_j\in\mathbb C$.
So, the relation (\ref{pv}) is written by
$$
\begin{aligned}
&\sum_{j=1}^{d_{N+1}} \left ( \zeta_j |a_1,\,\ldots,\,a_N\rangle\right)\otimes|y_j\rangle \\
=&\sum_{j=1}^{d_{N+1}} \biggl( c_j |x_{1j},\,\ldots,\,x_{Nj}\rangle
+\sum_{i=d_{N+1}+1}^K c_i \eta_{ij}|x_{1i},\,\ldots,\,x_{Ni}\rangle \biggr )\otimes |y_j\rangle.
\end{aligned}
$$
We take $\alpha\in\{1,2,\dots, d_{N+1}\}$ with
$\zeta_\alpha\neq 0$. By the linear independence of $|y_1\rangle,
\dots, |y_{d_{N+1}}\rangle$ again, we conclude that $|a_1,\,\ldots,\,a_N\rangle$ is contained in the  subspace $\mathcal
W\subset \mathbb C^{d_1}\otimes \cdots \ot \mathbb C^{d_N}$
generated by $K-d_{N+1}+1$ product vectors
$$
|x_{1\alpha},\,\ldots,\,x_{N\alpha}\rangle,\qquad
|x_{1i},\,\ldots,\,x_{Ni}\rangle \ \
(d_{N+1}+1\le i \le K).
$$
Since $K-d_{N+1}+1\le k-d_{N+1}+1\le \sum_{i=1}^N (d_i-1)$, the
induction hypothesis tells us that the product vector $|a_1,\,\ldots,\,a_N\rangle$ is a scalar multiple of
$|x_{1\beta},\,\ldots,\,x_{N\beta}\rangle $ for
some $\beta\in \{\alpha\}\cup \{ d_{N+1}+1,\cdots, K\}$, and this
completes the proof of statement (\ref{sss}).

Now, we may assume $\beta=1$, that is
\begin{equation}\label{pv2}
|a_1,\,\ldots,\,a_N,\,b\rangle
= c|x_{11},\,\ldots,\,x_{N1},\,b\rangle,
\end{equation}
with a nonzero scalar $c\in\mathbb C$,
by rearranging $K$ product vectors $|x_{1i},\,\ldots,\,x_{Ni},\,y_i\rangle $ with $1\le i\le K$.
Therefore, we have
\begin{equation}\label{pv3}
c|x_{11}\rangle \otimes |x_{21},\,\ldots,\,x_{N1},\,b\rangle \\
=\sum_{i=1}^{K} |x_{1i}\rangle \otimes\Bigl(  c_i|x_{2i},\,\ldots,\,x_{Ni},\,y_i\rangle  \Bigr),
\end{equation}
by (\ref{pv}) and (\ref{pv2}).

If $K\le d_1$ then we see that $|x_{11}\rangle, \dots, |x_{1K}\rangle$ are linearly independent, and so
we have $c_i=0$ for all $i=2,\cdots,k$ in (\ref{pv}).
Therefore, it remains to consider the case $K>d_1$. In this case,
we note that $\{ |x_{11}\rangle, \dots, |x_{1d_1}\rangle \}$
is a basis of $\mathbb C^{d_1}$. So, we have
\[
|x_{1j}\rangle = \sum_{i=1}^{d_1}\mu_{ij}|x_{1i}\rangle\quad
(d_1+1\le j \le K)
\]
with scalars $\mu_{ij}\in\mathbb C$.
We note that all $\mu_{ij}$'s are nonzero since the product vectors
$\{|x_{1i}\rangle:i=1,2,\dots, K\}$ are in general position.
Thus we have
\begin{equation*}
\begin{aligned}
&c|x_{11}\rangle \otimes |x_{21},\,\ldots,\,x_{N1},\,b\rangle \\
=&\sum_{i=1}^{d_1} |x_{1i}\rangle \otimes\biggl(  c_i|x_{2i},\,\ldots,\,x_{Ni},\,y_i\rangle +
\sum_{j=d_1+1}^K c_j \mu_{ij} |x_{2j},\,\ldots,\,x_{Nj},\,y_j\rangle \biggr),
\end{aligned}
\end{equation*}
by (\ref{pv3}).
From the linear independence of $|x_{11}\rangle, \dots, |x_{1d_1}\rangle$, we have
\begin{equation}\label{comp2}
c_i|x_{2i},\,\ldots,\,x_{Ni},\,y_i\rangle
+\sum_{j=d_1+1}^K c_j \mu_{ij}|x_{2j},\,\ldots,\,x_{Nj},\,y_j\rangle=0,
\end{equation}
for $\ i=2,\cdots, d_1$. Since $K-d_1+1\le k-(d_1-1)\le
\sum_{i=2}^{N+1}(d_i-1)$, we see that  $K-d_1+1$ product vectors
in \eqref{comp2} are linearly independent in $\mathbb C^{d_2}\otimes
\cdots \otimes\mathbb C^{d_{N+1}}$ by Proposition \ref{prop1}.
Recall that all $\mu_{ij}$ are nonzero. Therefore, we conclude that
$c_i=0$ for all $i=2,\cdots,k$ in (\ref{pv}), and this completes the
proof.
\end{proof}

If three product vectors in $\mathbb C^2\ot\mathbb C^2$ are in general position then the span of them has always
infinitely many product vectors. See \cite{moment}. The span of four product vectors in $\mathbb C^2\ot\mathbb C^2\otimes \mathbb C^2$
may also have infinitely many product vectors.
We denote by $|z_t\rangle$ the product vector in $\mathbb C^2\ot\mathbb C^2\ot\mathbb C^2$
given by three times repeated tensor product of $(1,t)^{\rm t}\in\mathbb C^2$. Four such product vectors are in general position, and
their span has all $|z_t\rangle$'s. But, such examples are very rare in the three qubit case. Actually, generic choice of four product vectors
$\{|x_{1i}\rangle \ot |x_{2i}\rangle \ot |x_{3i}\rangle\}$ in general position gives us linearly independent vectors
$\{|x_{2i}\rangle\ot |x_{3i}\rangle\}$ in $\mathbb C^4$. In this case,
we may apply Theorem \ref{gp_simplex} for $\mathbb C^2\ot\mathbb C^4$ to conclude that there are no more product
vectors in their span. We also see that the set $\{|z_t\rangle\langle z_t|:t\in\mathbb R\}$ of product states spans the seven dimensional
subspace in the real vector space of all $8\times 8$ self-adjoint matrices.
This shows that the number $\sum_{i=1}^n 2(d_i-1)+1$ is the best choice in the following proposition.

\begin{proposition}
Suppose that $k$ product vectors in $\mathbb C^{d_1}\ot \mathbb C^{d_2}\ot\cdots\ot\mathbb C^{d_n}$
are in general position.
If $k\le \sum_{i=1}^n 2(d_i-1)+1$ then the corresponding product states are linearly independent.
\end{proposition}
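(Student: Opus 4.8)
The plan is to reduce the linear independence of the product states to the linear independence of product vectors already settled in Proposition \ref{prop1}, via the vectorization isomorphism. Recall that under the standard identification of a $d\times d$ matrix $A=(A_{pq})$ with the vector $\sum_{p,q}A_{pq}|e_p\rangle\ot|e_q\rangle$ in $\mathbb C^d\ot\mathbb C^d$, the rank-one matrix $|x\rangle\langle x|$ is carried to $|x\rangle\ot|\ol x\rangle$, where $|\ol x\rangle$ denotes the entrywise complex conjugate. Applying this factor by factor, the product state
\[
|z_i\rangle\langle z_i|=|x_{1i}\rangle\langle x_{1i}|\ot\cdots\ot|x_{ni}\rangle\langle x_{ni}|
\]
is sent to the product vector
\[
|x_{1i}\rangle\ot|\ol{x_{1i}}\rangle\ot\cdots\ot|x_{ni}\rangle\ot|\ol{x_{ni}}\rangle
\]
in the $2n$-partite space $\mathbb C^{d_1}\ot\mathbb C^{d_1}\ot\cdots\ot\mathbb C^{d_n}\ot\mathbb C^{d_n}$, after a harmless rearrangement of the tensor factors. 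Since vectorization is a linear isomorphism of the space of matrices onto this larger space, the matrices $\{|z_i\rangle\langle z_i|\}$ are linearly independent if and only if these $2n$-partite product vectors are.

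Next I would verify that the new product vectors are in general position. The $2n$ components split into $n$ \emph{ket} components, carrying the vectors $|x_{ji}\rangle$, and $n$ \emph{conjugate} components, carrying $|\ol{x_{ji}}\rangle$, each in a space of dimension $d_j$. For a ket component, the required independence of any $\le d_j$ of the $|x_{ji}\rangle$ is exactly the general position hypothesis on the original vectors. For a conjugate component, note that $\sum_i c_i|\ol{x_{ji}}\rangle=0$ yields $\sum_i\ol{c_i}|x_{ji}\rangle=0$ after conjugating, so conjugation preserves linear independence; hence general position for the $|x_{ji}\rangle$ transfers verbatim to the $|\ol{x_{ji}}\rangle$. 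Thus the $2n$-partite product vectors are in general position.

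Finally I would match the numerics. The sum of (dimension $-1$) over the $2n$ components equals $\sum_{j=1}^n\bigl((d_j-1)+(d_j-1)\bigr)=\sum_{j=1}^n 2(d_j-1)$, so Proposition \ref{prop1} applies precisely when $k\le\sum_{j=1}^n 2(d_j-1)+1$, which is the stated hypothesis. Invoking Proposition \ref{prop1} on the $2n$-partite system then shows that the product vectors, and therefore the product states, are linearly independent; indeed one obtains independence over $\mathbb C$, which in particular gives the real linear independence of the self-adjoint matrices.

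I do not anticipate a serious obstacle: once the vectorization trick is in place, the conclusion is immediate. The only points demanding care are the bookkeeping of the conjugate factors---ensuring that general position is genuinely inherited by the conjugated vectors rather than silently assumed---and confirming that the doubled dimension count reproduces the bound $\sum_i 2(d_i-1)+1$ exactly, so that Proposition \ref{prop1} applies with no slack to spare.
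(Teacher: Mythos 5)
Your proof is correct, and it takes a genuinely different route from the paper's. The paper stays inside the original $n$-partite space: it first settles the range $k\le\sum_i(d_i-1)+1$ by the standard fact that rank-one projections onto linearly independent vectors are linearly independent (using Proposition \ref{prop1}), and then, for $\sum_i(d_i-1)+1<k\le\sum_i 2(d_i-1)+1$, it constructs a product test vector $|\psi\rangle=|\psi_1,\ldots,\psi_n\rangle$ whose $j$-th component is orthogonal to the span of a designated block of $d_j-1$ of the vectors $|x_{ji}\rangle$; applying a vanishing combination $\sum_i c_i|z_i\rangle\langle z_i|$ to $|\psi\rangle$ kills the first $\sum_\ell(d_\ell-1)$ terms, while general position guarantees the surviving scalar factors are nonzero, so Proposition \ref{prop1} applied to the remaining $k-\sum_\ell(d_\ell-1)\le\sum_\ell(d_\ell-1)+1$ product vectors forces those $c_i$ to vanish, and the first block is then handled by the easy case. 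Your argument instead vectorizes, sending $|z_i\rangle\langle z_i|$ to the product vector $\bigotimes_j\bigl(|x_{ji}\rangle\ot|\ol{x_{ji}}\rangle\bigr)$ in the $2n$-partite system $\bigotimes_j\bigl(\mathbb C^{d_j}\ot\mathbb C^{d_j}\bigr)$, checks that general position survives entrywise conjugation, and invokes Proposition \ref{prop1} a single time; this works because Proposition \ref{prop1} is proved for an arbitrary number of tensor factors, so applying it with $2n$ factors is legitimate. What your route buys is conceptual economy: the bound $\sum_i 2(d_i-1)+1$ is no longer the output of a two-stage bookkeeping but is literally the Proposition \ref{prop1} threshold for the doubled system, which also explains why the moment-curve examples saturating Proposition \ref{prop1} double up to saturate this bound. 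What the paper's route buys is that it never leaves the $n$-partite setting or uses the vec isomorphism, which keeps the argument elementary and in the same notation as the rest of Section 3. Your closing remarks cover the only delicate points (conjugation preserves independence; complex independence of the self-adjoint matrices implies the real independence that is actually needed), so there is no gap.
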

\begin{proof}Since pure product states corresponding to linearly independent product vectors are always
linearly independent, this assertion is true for $1\le k \le \sum_{i=1}^n (d_i-1)+1$ by Proposition \ref{prop1}.

Now, we consider the case of $\sum_{i=1}^n (d_i-1)+1< k\le \sum_{i=1}^n 2(d_i-1)+1$.
Let $\mathcal V_i\subset \mathbb C^{d_i}$ be the subspace generated by $d_i-1$ product vectors as follows:
\begin{equation*}
\begin{aligned}
\mathcal V_1&=\text{\rm span}\{|x_{1i}\rangle : 1\le i\le d_1-1\},\\
\mathcal V_j&=\text{\rm span}\left\{|x_{ji}\rangle: \sum_{\ell=1}^{j-1}
(d_{\ell}-1)+1 \le i \le \sum_{\ell=1}^{j} (d_{\ell}-1)\right\},\qquad j=2,3,\cdots, n.
\end{aligned}
\end{equation*}
We choose $n$ vectors
$|\psi_j\rangle \in \mathbb C^{d_j}$ such that $|\psi_j\rangle \in \mathcal V_j^{\perp}$ for each $j=1,2,\cdots,n$. We note that
\[
\sum_{\ell=1}^n (d_{\ell}-1)+1\le i\le k\, \Longrightarrow\, \prod_{j=1}^n \langle x_{j i}|\psi_{j}\rangle\neq 0,
\]
since the given $k$ product vectors are in general position.

Now, we put $|\psi\rangle =|\psi_1,\,\psi_2,\,\ldots,\,\psi_n\rangle$ and
$|z_i\rangle =|x_{1i},\,x_{2i},\,\ldots,\,x_{ni}\rangle$ for each $i=1,2,\cdots, k$. We suppose that
$\sum_{i=1}^k c_i|z_i\rangle \langle z_i|=0$. Then we have that
\[
0=\sum_{i=1}^k c_i|z_i\rangle \langle z_i|\psi\rangle
=\sum_{i=\sum_{\ell=1}^n (d_{\ell}-1)+1}^k \left(c_i \prod_{j=1}^n \langle x_{ji}|\psi_j\rangle \right)
|x_{1i},\,\ldots,x_{ni}\rangle.
\]
Since $k-\sum_{\ell=1}^n (d_{\ell}-1) \le \sum_{\ell=1}^n (d_{\ell}-1)+1$,
the product vectors in the righthand side of the above equality are linearly
independent by Proposition \ref{prop1}. Therefore, we have $c_i=0$ whenever
$\sum_{\ell=1}^n (d_{\ell}-1)+1\le i \le k$. Since $\sum_{\ell=1}^n (d_{\ell}-1)$ product states
$|z_i\rangle \langle z_i|$ for $1\le i \le \sum_{\ell=1}^n (d_{\ell}-1)$ are linearly independent
as stated in the beginning of this proof, we see that $c_i=0$ for all $1\le i \le \sum_{\ell=1}^n (d_{\ell}-1)$.
This completes the proof.
\end{proof}

\section{Three qubit case}\label{sec-3-qubit}

In the three qubit case, three product vectors in general position make a simplicial face by Theorem \ref{gp_simplex}.
The following theorem gives us a necessary and sufficient condition on four product vectors in general position
for which they make a simplicial face.

\begin{theorem}\label{theorem_gp}
Let $\{|z_i\rangle=|x_{1i}\rangle\ot |x_{2i}\rangle\ot |x_{3i}\rangle:i=1,2,3,4\}$
be in general position in $\mathbb C^2\ot\mathbb C^2\ot\mathbb C^2$.
Then they make a simplicial face if and only if $\{|x_{ji}\rangle\otimes |x_{ki}\rangle:i=1,2,3,4\}$
is linearly independent in $\mathbb C^2\ot\mathbb C^2$
for some $(j,k)=(1,2),\,(2,3),\,(3,1)$.
\end{theorem}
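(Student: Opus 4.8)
The plan is to prove both implications by first reducing, through local invertible transformations, to a one-parameter normal form, and then to play Theorem \ref{gp_simplex} (applied in $\mathbb C^2\otimes\mathbb C^4$) against the symmetric subspace of $\mathbb C^2\otimes\mathbb C^2\otimes\mathbb C^2$. The key observation is that applying $A_1\otimes A_2\otimes A_3$ with each $A_j\in GL(2,\mathbb C)$ is an invertible map on $\mathbb C^8$ that preserves product structure, linear independence, general position, the number of product vectors in a span, and the linear independence of each grouping $\{|x_{ji}\rangle\otimes|x_{ki}\rangle\}$ (it acts on the latter by the invertible $A_j\otimes A_k$). Hence every hypothesis and conclusion of the theorem is invariant, and since the four vectors in each qubit leg are pairwise non-parallel, I may normalize $|x_{j1}\rangle,|x_{j2}\rangle,|x_{j3}\rangle,|x_{j4}\rangle$ to $|e_1\rangle,\ |e_2\rangle,\ |e_1\rangle+|e_2\rangle,\ |e_1\rangle+t_j|e_2\rangle$ for each $j=1,2,3$, with $t_j\in\mathbb C\setminus\{0,1\}$.

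For the \lq if\rq\ part I need no normal form. Assuming $\{|x_{ji}\rangle\otimes|x_{ki}\rangle:i=1,2,3,4\}$ is linearly independent in $\mathbb C^2\otimes\mathbb C^2\cong\mathbb C^4$ for some pair, I regard the four $|z_i\rangle$ as product vectors in $\mathbb C^2\otimes\mathbb C^4$ by grouping the $j,k$ legs. General position in $\mathbb C^2\otimes\mathbb C^4$ holds, since the remaining single qubit supplies pairwise independence and the assumed independence of the four combined vectors is exactly the $\mathbb C^4$-side condition. Here $\sum_i(d_i-1)=4=k$, so Theorem \ref{gp_simplex} gives no further product vectors in the span in the $\mathbb C^2\otimes\mathbb C^4$ sense, hence none in the finer three-qubit sense, as every three-qubit product vector is a fortiori such a bipartite product vector. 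The four $|z_i\rangle$ are linearly independent by Proposition \ref{prop1} (since $k=4\le\sum_i(d_i-1)+1$), so $\varrho$ has a unique decomposition and the face is simplicial.

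For the \lq only if\rq\ part I argue by contraposition in the normal form. A direct $4\times4$ determinant computation shows that the grouping $\{|x_{ji}\rangle\otimes|x_{ki}\rangle\}$ is linearly dependent precisely when $t_j=t_k$; consequently, no grouping is linearly independent exactly when $t_1=t_2=t_3=:t$. In that case each $|z_i\rangle$ has the symmetric form $(|e_1\rangle+s_i|e_2\rangle)^{\otimes 3}$ with $s_i\in\{0,\infty,1,t\}$, so all four lie in the symmetric subspace $\mathrm{Sym}^3\mathbb C^2$, which is four-dimensional. Being linearly independent (Proposition \ref{prop1}), they span it, and this subspace contains the entire one-parameter family $(|e_1\rangle+s|e_2\rangle)^{\otimes 3}$ of product vectors. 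Thus the span carries infinitely many product vectors, the decomposition of $\varrho$ is far from unique, and the face cannot be simplicial.

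The mildly delicate points are the bookkeeping of general position under the regrouping in the \lq if\rq\ part, and, in the \lq only if\rq\ part, the passage from \lq extra product vectors in the span\rq\ to \lq non-uniqueness of the decomposition\rq: the latter rests on the standard description of the face determined by $\varrho$ as the convex hull of all product states in $\mathrm{range}\,\varrho$, so infinitely many product vectors force infinitely many extreme points and preclude a simplex. I expect the main obstacle to be the \lq only if\rq\ direction, and specifically the verification that the simultaneous failure of all three groupings forces the fully symmetric configuration $t_1=t_2=t_3$; once that is established, the moment-curve family already noted in the discussion following Theorem \ref{gp_simplex} supplies the extra product vectors automatically.
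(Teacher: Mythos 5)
Your \lq if\rq\ direction is correct, and it is in substance the paper's own argument: the paper makes exactly this regrouping observation (three qubits viewed as $\mathbb C^2\otimes\mathbb C^4$, then Theorem \ref{gp_simplex} with $\sum_i(d_i-1)=4$ plus Proposition \ref{prop1}) in the discussion preceding the theorem, and its formal route through Proposition \ref{lemma_gp} amounts to the same thing. Your cross-ratio normal form is also correct as far as it goes: the determinant of the four vectors $|x_{ji}\rangle\otimes|x_{ki}\rangle$ in the normal form is $t_j-t_k$ up to sign, so simultaneous failure of all three groupings does force $t_1=t_2=t_3$, i.e.\ a configuration equivalent under local $GL(2,\mathbb C)$ maps to four points $(|e_1\rangle+s_i|e_2\rangle)^{\otimes3}$ on the cubic curve inside $\mathrm{Sym}^3\mathbb C^2$. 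This is a legitimate substitute for the paper's chain (i)$\Rightarrow$(ii)$\Rightarrow$(iii) in Proposition \ref{lemma_gp}; both routes end at the same place, namely that the span $D$ contains infinitely many product vectors.

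The gap is the final inference, from \lq\lq infinitely many product vectors in $D$\rq\rq\ to \lq\lq the face is not simplicial.\rq\rq\ Your justification --- that the face determined by $\varrho$ is the convex hull of \emph{all} product states with vectors in $\mathcal R\varrho$ --- is not a correct description of that face: that convex hull is indeed a face of $\mathbb S$, but it need not be the face in which $\varrho$ is an \emph{interior} point, and $\varrho$ can generate a strictly smaller, simplicial face inside it. Worse, the step cannot be repaired, because with the paper's own definition of simplicial face (unique decomposition) the symmetric configuration admits counterexamples. Take $s_1,\dots,s_4$ to be the stereographic images of a regular tetrahedron on $S^2$. The function $(x,y,z)\mapsto\tfrac1{3\sqrt3}-xyz$ is nonnegative on $S^2$ by the arithmetic--geometric mean inequality and vanishes exactly at those four points; equivalently, the Hermitian $(3,3)$ form $F(s)=\tfrac1{3\sqrt3}(1+|s|^2)^3+i(s^2-\bar s^2)(1-|s|^2)$ is nonnegative on $\mathbb C$ with exactly four zeros. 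Since the symbol map $A\mapsto\langle w_u|A|w_u\rangle$ from Hermitian operators on the $4$-dimensional space $\mathrm{Sym}^3\mathbb C^2$ ($16$ real dimensions) to spherical polynomials of degree at most $3$ (dimension $1+3+5+7=16$) is injective, hence bijective, there is a Hermitian $A$ whose expectation is $\ge0$ on every state $(|e_1\rangle+s|e_2\rangle)^{\otimes3}$ and zero exactly at the four tetrahedral ones. Every product vector occurring in a decomposition of $\varrho=\frac14\sum_i|z_i\rangle\langle z_i|$ lies in $\mathcal R\varrho=\mathrm{Sym}^3\mathbb C^2$, and a symmetric product tensor necessarily has all three factors parallel, so it is again a cube; pairing with $A$ then forces any decomposition to use only the four tetrahedral states, and linear independence fixes the weights. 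Thus this configuration has a \emph{unique} decomposition and makes a simplicial face even though all three groupings are dependent. So your argument proves only that $D$ contains infinitely many product vectors, which is strictly weaker than non-uniqueness; the implication you invoke is valid only under the different reading that the relevant face is $\{\sigma\in\mathbb S:\mathcal R\sigma\subset D\}$, which is not the notion fixed in the paper's introduction. It is only fair to add that the paper's own deduction of Theorem \ref{theorem_gp} from Proposition \ref{lemma_gp} makes the same silent leap, so your attempt exposes a genuine defect in the \lq only if\rq\ direction of the statement itself rather than merely missing the paper's argument.
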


This theorem follows from the following proposition, because four product vectors
in general position are linearly independent by Proposition \ref{prop1} and
$\{|x_{ji}\rangle\otimes |x_{ki}\rangle:i=1,2,3,4\}$ spans at least three dimensional space by Proposition \ref{prop1} again.

\begin{proposition}\label{lemma_gp}
Let $\{|z_i\rangle=|x_{1i}\rangle\ot |x_{2i}\rangle\ot |x_{3i}\rangle:i=1,2,3,4\}$ be
in general position in $\mathbb C^2\ot\mathbb C^2\ot\mathbb C^2$,
and $D$ the subspace of $\mathbb C^2\ot\mathbb C^2\ot\mathbb C^2$ spanned by these product vectors.
Then the following are equivalent:
\begin{itemize}
\item[(i)] $D$ has a product vector which is not parallel to any $|z_i\rangle$.
\item[(ii)] $\text{\rm dim(span}\{|x_{ji}\rangle\otimes |x_{ki}\rangle:i=1,2,3,4\})=3$ for every $(j,k)=(1,2),\,(2,3),\,(3,1)$.
\item[(iii)] $D$ has infinitely many product vectors.
\end{itemize}
\end{proposition}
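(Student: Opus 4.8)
The plan is to prove the cycle of implications (i) $\Rightarrow$ (ii) $\Rightarrow$ (iii) $\Rightarrow$ (i). The implication (iii) $\Rightarrow$ (i) is immediate: if $D$ contains infinitely many pairwise non-parallel product vectors, then, since only the four directions $|z_1\rangle,\dots,|z_4\rangle$ are excluded, at least one of them is not parallel to any $|z_i\rangle$.

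For (i) $\Rightarrow$ (ii) I argue by contraposition. By Proposition \ref{prop1} the span of $\{|x_{ji}\rangle\ot|x_{ki}\rangle:i=1,2,3,4\}$ is always at least three-dimensional, so failure of (ii) means that for some pair, say $(j,k)=(1,2)$, these four vectors are linearly independent in $\mathbb C^2\ot\mathbb C^2\cong\mathbb C^4$. Grouping the first two tensor factors, I regard the $|z_i\rangle$ as product vectors $(|x_{1i}\rangle\ot|x_{2i}\rangle)\ot|x_{3i}\rangle$ in $\mathbb C^4\ot\mathbb C^2$. They are in general position there: the four $\mathbb C^4$-components are independent by assumption, and any two of the $|x_{3i}\rangle$ are independent by the original general position hypothesis. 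Since $4=(4-1)+(2-1)$, Theorem \ref{gp_simplex} applies and shows that $D$ contains no product vector of $\mathbb C^4\ot\mathbb C^2$ other than scalar multiples of the $|z_i\rangle$. As any tripartite product vector $|a_1,a_2,a_3\rangle=(|a_1\rangle\ot|a_2\rangle)\ot|a_3\rangle$ is in particular such a bipartite product vector, every product vector in $D$ is then parallel to some $|z_i\rangle$, so (i) fails.

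The main work is (ii) $\Rightarrow$ (iii), and I would begin by normalizing. Applying an invertible local transformation $A_1\ot A_2\ot A_3$ with $A_j\in\mathrm{GL}(2,\mathbb C)$ preserves general position, carries product vectors to product vectors bijectively, preserves parallelism, and preserves the span dimensions appearing in (ii); hence it preserves the conclusion (iii). For each factor $j$ I choose $A_j$ sending $|x_{j1}\rangle,|x_{j2}\rangle,|x_{j3}\rangle$ to $|e_1\rangle,|e_2\rangle,|e_1\rangle+|e_2\rangle$, after which $|x_{j4}\rangle$ becomes $|e_1\rangle+t_j|e_2\rangle$ for a uniquely determined scalar $t_j\in\mathbb C\setminus\{0,1\}$ (the cross-ratio of the four points). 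A short determinant computation, reducing the four coordinate vectors $(1,0,0,0)$, $(0,0,0,1)$, $(1,1,1,1)$, $(1,t_k,t_j,t_jt_k)$, shows that the span of $\{|x_{ji}\rangle\ot|x_{ki}\rangle\}$ has dimension three precisely when $t_j=t_k$. Thus condition (ii), demanding dimension three for all three pairs, is equivalent to $t_1=t_2=t_3=:t$.

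When $t_1=t_2=t_3=t$ the four normalized vectors are all symmetric, namely $|v_s\rangle^{\ot 3}$ with $|v_s\rangle=|e_1\rangle+s|e_2\rangle$ for $s=0,1,t$ together with $|e_2\rangle^{\ot 3}$, so they lie in the four-dimensional symmetric subspace $\mathrm{Sym}^3(\mathbb C^2)$. Since the coordinates of $|v_s\rangle^{\ot 3}$ in the monomial symmetric basis are $(1,s,s^2,s^3)$, a Vandermonde argument (with the values $0,1,t,\infty$ distinct because $t\neq 0,1$) shows these four vectors are linearly independent, whence $D=\mathrm{Sym}^3(\mathbb C^2)$. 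This subspace contains the entire rational normal curve $\{|v_s\rangle^{\ot 3}:s\in\mathbb C\}$, an infinite family of pairwise non-parallel product vectors, so (iii) holds. I expect this implication (ii) $\Rightarrow$ (iii) to be the crux: the content is recognizing that (ii) is exactly the vanishing of the three differences $t_j-t_k$, which forces the four product vectors into the symmetric subspace and thereby produces the twisted-cubic family of product vectors; by contrast (i) $\Rightarrow$ (ii) is routine once the problem is recast as bipartite and Theorem \ref{gp_simplex} is invoked.
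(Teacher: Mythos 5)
Your proof is correct, but it follows a genuinely different route from the paper's, most notably in the key implication (ii) $\Rightarrow$ (iii). The paper proves (i) $\Rightarrow$ (ii) directly: writing a hypothetical extra product vector $|\alpha,\beta,\gamma\rangle$ as a combination of the $|z_i\rangle$ and pairing with $|\alpha^\perp\rangle$ yields a nontrivial dependency among the $\{|x_{2i}\rangle\ot|x_{3i}\rangle\}$; your contrapositive via the regrouping $\mathbb C^4\ot\mathbb C^2$ and Theorem \ref{gp_simplex} is equally valid, and in fact reproduces an observation the paper itself makes informally right after Theorem \ref{gp_simplex}. For (ii) $\Rightarrow$ (iii) the paper stays coordinate-free: using the two $3$-dimensional pair-spans it builds an explicit basis $\{|\zeta\rangle\ot|e_i\rangle,\ |e_i\rangle\ot|\xi\rangle\}$ of $D^\perp$, and then for each $|\beta\rangle\in\mathbb C^2$ solves two orthogonality conditions to produce a product vector $|\alpha,\beta,\gamma\rangle\in D$; note this uses only two of the three pair conditions. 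Your argument instead normalizes by local $\mathrm{GL}(2,\mathbb C)$ transformations, identifies condition (ii) with the equality of cross ratios $t_1=t_2=t_3$, and concludes that $D$ is locally equivalent to the symmetric subspace $\mathrm{Sym}^3(\mathbb C^2)$ containing the rational normal curve $\{(|e_1\rangle+s|e_2\rangle)^{\ot 3}\}$. What this buys is a sharper structural statement than the proposition itself: the exceptional configurations of four product vectors in general position are \emph{exactly} those locally equivalent to four points of the twisted cubic, i.e., to the example $|z_t\rangle$ that the paper exhibits in Section 3 but never proves to be essentially unique; the cost is a coordinate computation and a case normalization that the paper's two-line duality argument avoids. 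Both proofs are complete; the determinant checks ($\det = t_j-t_k$ for the pair-spans, and $t(t-1)\neq 0$ for the Vandermonde step with values $0,1,t,\infty$) go through as you assert.
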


\begin{proof}
(i) $\Longrightarrow$ (ii): We first show that $\{|x_{2i},\,x_{3i}\rangle:i=1,2,3,4\}$ is linearly dependent.
To see this, we write
$|\alpha,\,\beta,\, \gamma\rangle=\sum_{i=1}^4 a_i |x_{1i},\,x_{2i},\,x_{3i}\rangle$, where at least two of $a_i$'s are nonzero.
Take a nonzero vector $|\alpha^\perp\rangle$
which is orthogonal to $|\alpha\rangle$, then we have
$$
\sum_{i=1}^4 a_i\langle \alpha^\perp|x_{1i}\rangle |x_{2i},\,x_{3i}\rangle=0.
$$
We note that $\langle \alpha^\perp|x_{1i}\rangle\neq 0$ for at least three $i=1,2,3,4$.
Therefore, we see  that $\{|x_{2i},\,x_{3i}\rangle:i=1,2,3,4\}$ are linearly dependent, and have the required conclusion
by Proposition \ref{prop1} for $\mathbb C^2\ot\mathbb C^2$.

(ii) $\Longrightarrow$ (iii): We identify $\mathbb C^2\ot\mathbb C^2\ot\mathbb C^2$ with $\mathbb C^4\ot\mathbb C^2$
in the obvious way.
Take  $|\zeta\rangle\in\mathbb C^4$ which is orthogonal to the span of $\{|x_{1i},\,x_{2i}\rangle:i=1,2,3,4\}$.
Then $|\zeta\rangle\ot |e_1\rangle$ and $|\zeta\rangle\ot |e_2\rangle$ are orthogonal to $D$.
By the same reasoning, we can also take a non-product vector $|\xi\rangle\in\mathbb C^4$ such that
$|e_1\rangle\ot |\xi\rangle$ and $|e_2\rangle\ot |\xi\rangle$
are orthogonal to $D$. It is easy to see that $\{|\zeta\rangle\ot |e_i\rangle,\, |e_i\rangle\ot |\xi\rangle: i=1,2\}$
is a basis of $D^\perp$. Indeed, if the intersection of two subspaces
$\spa\{|\zeta\rangle\otimes |e_1\rangle,\,|\zeta\rangle \otimes |e_2\rangle\}$
and $\spa\{|e_1\rangle \otimes |\xi\rangle,\, |e_2\rangle\otimes |\xi\rangle\}$
has a nonzero vector then it must be a product vector in $\mathbb C^2\otimes \mathbb C^2\otimes \mathbb C^2$.
In such a case, the nonzero vector $|\zeta\rangle$ should be a product vector in $\mathbb C^2\ot\mathbb C^2$.
But, this is impossible since
$\{|x_{1i}\rangle\otimes |x_{2i}\rangle:i=1,2,3,4\}$ is in general position in $\mathbb C^2\otimes \mathbb C^2$.

Now, for each $|\beta\rangle\in \mathbb C^2$, there exist $|\alpha\rangle$ and $|\gamma\rangle$ such that
$ |\alpha\rangle \ot |\beta\rangle$ is orthogonal to $|\zeta\rangle$
and $|\beta\rangle\ot |\gamma\rangle$ is orthogonal to $|\xi\rangle$. Therefore, we see that there are infinitely many product vectors in $D$.

There is nothing to prove for the direction (iii) $\Longrightarrow$  (i).
\end{proof}

It was shown in \cite{bra} that any unextendible product basis in $\mathbb C^2\ot\mathbb C^2\ot\mathbb C^2$ has exactly four vectors,
and there exists no other product vectors in their span. Therefore,
they must be in general position as product vectors in $\mathbb C^2\ot\mathbb C^4$ by Proposition \ref{lemma_gp}.
This can be seen directly, as follows.

\begin{proposition}
Let four product vectors form an unextendible product basis in the
space $\mathbb C^2\ot \mathbb C^2\ot\mathbb C^2$. Then they are in
general position as vectors in $\mathbb C^2\ot \mathbb C^4$, as well
as in $\mathbb C^2\ot\mathbb C^2\ot\mathbb C^2$.
\end{proposition}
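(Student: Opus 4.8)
The plan is to split the statement into its two halves and dispatch the three-qubit part immediately. An unextendible product basis is in particular a generalized unextendible product basis, and here $|I|=4=\sum_{i=1}^{3}(d_i-1)+1$, so Proposition \ref{gp} yields general position in $\mathbb C^2\ot\mathbb C^2\ot\mathbb C^2$ at once. This already records the fact I will reuse: for each party $j$ the four local vectors $|x_{j1}\rangle,\dots,|x_{j4}\rangle$ are pairwise non-parallel in $\mathbb C^2$. Thus the whole content lies in the second half.

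For the bipartite claim I would fix the grouping that treats party $1$ as the $\mathbb C^2$-factor and parties $2,3$ as the $\mathbb C^4$-factor; the other two groupings are identical after permuting the labels. The $\mathbb C^2$-part of general position is exactly the non-parallel condition above, so it remains to show that the four product vectors $|x_{2i}\rangle\ot|x_{3i}\rangle$, $i=1,2,3,4$, are linearly independent in $\mathbb C^4$. I would prove this by showing that their Gram matrix $G_{ii'}=\langle x_{2i}|x_{2i'}\rangle\,\langle x_{3i}|x_{3i'}\rangle$ is nonsingular.

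The crux is a combinatorial reading of the orthogonality pattern forced by orthonormality, and this is the step where being an honest \emph{orthogonal} basis (not merely a GUPB) is essential. For each party $j$ form the graph on $\{1,2,3,4\}$ whose edges are the pairs $(i,i')$ with $\langle x_{ji}|x_{ji'}\rangle=0$. Since the orthogonal complement of a vector in $\mathbb C^2$ is one-dimensional and the four local vectors are pairwise non-parallel, each such graph has maximum degree one, hence carries at most two edges. On the other hand orthonormality forces $\langle x_{1i}|x_{1i'}\rangle\langle x_{2i}|x_{2i'}\rangle\langle x_{3i}|x_{3i'}\rangle=0$ for every $i\neq i'$, so each of the six pairs is an edge of at least one of the three graphs. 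As three matchings can carry at most six edges in total, each graph must be a perfect matching and the six pairs are distributed one to each graph, i.e.\ the three graphs are the (unique) $1$-factorization of $K_4$. In particular the party-$1$ graph is a perfect matching, say $\{(1,2),(3,4)\}$ after relabelling the four vectors.

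It then follows routinely that $G_{ii'}$ can be nonzero only when \emph{neither} party $2$ nor party $3$ is orthogonal on $(i,i')$, which by the previous paragraph means $(i,i')$ is carried by party $1$; hence $G_{ii'}=0$ unless $(i,i')\in\{(1,2),(3,4)\}$, and $G$ is block diagonal with the two $2\times2$ blocks indexed by $\{1,2\}$ and $\{3,4\}$. Each block is the Gram matrix of two product vectors $|x_{2i}\rangle\ot|x_{3i}\rangle$ and $|x_{2i'}\rangle\ot|x_{3i'}\rangle$ which are non-parallel because already $|x_{2i}\rangle\not\parallel|x_{2i'}\rangle$; by the strict Cauchy--Schwarz inequality its determinant is strictly positive, so $G$ is nonsingular and the four vectors span $\mathbb C^4$. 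Running the same argument for the groupings based on parties $2$ and $3$ finishes the proof. The main obstacle is precisely the orthogonality-pattern argument establishing the $1$-factorization; once that block-diagonal structure is in hand, the Cauchy--Schwarz estimate is mechanical.
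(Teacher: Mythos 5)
Your proof is correct, and it takes a recognizably different route from the paper's at both stages. The paper's proof starts by asserting (as ``easy to see'') that any three-qubit UPB can be written in the standard form $|x,\eta,z\rangle$, $|x^\perp,y,\zeta\rangle$, $|\xi,y^\perp,z^\perp\rangle$, $|\xi^\perp,\eta^\perp,\zeta^\perp\rangle$ --- which is exactly the statement that the orthogonality pattern is the $1$-factorization of $K_4$ --- and then proves linear independence of the four vectors $|\eta,z\rangle, |y,\zeta\rangle, |y^\perp,z^\perp\rangle, |\eta^\perp,\zeta^\perp\rangle$ by expanding everything in the basis $\{|y\rangle,|y^\perp\rangle\}$ and checking that a $4\times 4$ coefficient determinant is nonzero (the calculation is not displayed). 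You instead derive the $1$-factorization rigorously: general position (via Proposition \ref{gp}, since a UPB is a GUPB of cardinality $\sum_i(d_i-1)+1=4$) forces each party's orthogonality graph to be a matching, and the count $6=|E_1\cup E_2\cup E_3|\le |E_1|+|E_2|+|E_3|\le 6$ forces three disjoint perfect matchings. Then, in place of the paper's coordinate computation, you observe that the Gram matrix of the four $\mathbb C^4$ vectors is block diagonal with two $2\times 2$ blocks, each positive definite by strict Cauchy--Schwarz applied to non-parallel vectors. What your route buys is completeness and coordinate-freedom: the ``easy to see'' normal form and the unshown determinant are both replaced by short, checkable arguments, and your treatment of the three-qubit general-position claim (which the paper's proof never addresses explicitly, though it is needed for the $\mathbb C^2$ factor of the bipartite grouping) is made precise. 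What the paper's route buys is brevity and an explicit normal form for the UPB, which is reusable elsewhere.
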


\begin{proof}
It is easy to see that four product vectors are of the form
$$
|x\rangle\ot |\eta\rangle\ot |z\rangle,\quad
|x^\perp\rangle \ot |y\rangle \ot |\zeta\rangle,\quad
|\xi\rangle \ot |y^\perp\rangle \ot |z^\perp\rangle,\quad
|\xi^\perp\rangle \ot |\eta^\perp\rangle \ot |\zeta^\perp\rangle
$$
for $|x\rangle,\,|y\rangle,\,|z\rangle,\,|\xi\rangle,\,|\eta\rangle,\,|\zeta\rangle\in\mathbb C^2$,
where $|a^\perp\rangle$ denotes the vector orthogonal to $|a\rangle$ in $\mathbb C^2$. To see that
the following four vectors
$$
|\eta,\,z\rangle, \qquad
|y,\, \zeta\rangle,\qquad
|y^\perp,\, z^\perp\rangle,\qquad
|\eta^\perp,\, \zeta^\perp\rangle
$$
are linearly independent, we write $|\eta\rangle,\,|z\rangle,\,|\zeta\rangle$ as a linear combination of $|y\rangle$ and $|y^\perp\rangle$,
and consider the $4\times 4$ matrix whose rows are given by coefficients of the above four vectors.
By a direct calculation, we see that the determinant is nonzero.
\end{proof}

We proceed to consider five dimensional subspaces. Recall that generic five dimensional subspaces have
exactly six product vectors. We show that these six product vectors make a simplicial face.
Recall that every six dimensional subspace of $\mathbb C^2\ot\mathbb C^2\ot\mathbb C^2$ has infinitely many product vectors
by \cite{juhan}. We state our main results in this section:

\begin{theorem}\label{main-th}
Suppose that a five dimensional subspace $D$
of $\mathbb C^2\otimes \mathbb C^2\otimes \mathbb C^2$ has exactly six
product vectors. Then we have the followings:
\begin{enumerate}
\item[{\rm (i)}]
Any five product vectors among six product vectors are linearly independent.
\item[{\rm (ii)}]
The corresponding six pure product states are linearly independent.
\end{enumerate}
\end{theorem}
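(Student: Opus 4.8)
The plan is to deduce the whole theorem from the single principle that, since $D$ contains only finitely many product vectors, it cannot contain a \lq\lq fiber line\rq\rq: if two distinct product vectors $|z_i\ran=|a_i,b_i,c_i\ran$ and $|z_j\ran=|a_j,b_j,c_j\ran$ agree in two of the three factors, say $|b_i\ran\parallel|b_j\ran$ and $|c_i\ran\parallel|c_j\ran$, then the two dimensional space $\spa\{|z_i\ran,|z_j\ran\}\subset D$ is exactly the line $\{|a\ran\ot|b_i\ran\ot|c_i\ran:|a\ran\in\mathbb C^2\}$ and already carries infinitely many product vectors, which is absurd. I would first record two consequences of this: no two of the six product vectors are parallel (they are distinct points of the Segre variety), and no two of them agree in two factors. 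I also note that part (ii) will follow from part (i), so the real content is (i).

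Reduction of (ii) to (i). Suppose $\sum_{i=1}^6 c_i|z_i\ran\lan z_i|=0$. Taking Hermitian parts we may assume the $c_i$ are real, and tracing shows the $c_i$ cannot all have one sign, so $\rho:=\sum_{c_i>0}c_i|z_i\ran\lan z_i|=\sum_{c_i<0}(-c_i)|z_i\ran\lan z_i|$ is a nonzero separable state. Writing $S^{\pm}=\{i:\pm c_i>0\}$, the range of $\rho$ equals both $\spa\{|z_i\ran:i\in S^+\}$ and $\spa\{|z_i\ran:i\in S^-\}$; by (i) each of these collections (having at most five members) is linearly independent, so $|S^+|=|S^-|=\dim\,\mathrm{ran}\,\rho$. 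Since $S^+$ and $S^-$ are disjoint subsets of a six element set, $|S^+|=|S^-|\le 3$; but then any $|z_j\ran$ with $j\in S^-$ lies in $\spa\{|z_i\ran:i\in S^+\}$, giving a linear dependence among the at most four vectors $\{|z_i\ran:i\in S^+\}\cup\{|z_j\ran\}$ and contradicting (i).

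Proof of (i) by minimal dependence. I would show that no $k$ of the six are linearly dependent for $3\le k\le 5$. Assume a dependence $\sum_{i=1}^k\gamma_i|z_i\ran=0$ with $k$ minimal, so every $\gamma_i\neq0$. Fixing the first factor, I group the indices into parallelism classes of the vectors $|a_i\ran$; for a class $S$ I pick $\lan\omega|$ with $\lan\omega|a_i\ran=0$ exactly for $i\in S$ and apply $\lan\omega|\ot\id\ot\id$. The surviving relation is a dependence among the $k-|S|$ rank one vectors $\{|b_i\ran\ot|c_i\ran:i\notin S\}$ with nonzero coefficients, so $k-|S|\neq1$, while $k-|S|=2$ would force $|b_i\ran\ot|c_i\ran\parallel|b_j\ran\ot|c_j\ran$, i.e.\ two of the $|z_i\ran$ parallel or on a fiber line --- both excluded. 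Hence every class has size either $k$ or at most $k-3$; concretely the admissible sizes are $\{3\}$ for $k=3$, $\{1,4\}$ for $k=4$, and $\{1,2,5\}$ for $k=5$. A class of the full size $k$ means all $|a_i\ran$ are parallel, whence the $|z_i\ran=|a\ran\ot(|b_i\ran\ot|c_i\ran)$ reduce to rank one vectors in $\mathbb C^2\ot\mathbb C^2$ that are in general position (using that no two of the six agree in two factors); their span, being at least three dimensional, then contains infinitely many product vectors by the two qubit case, excluded again. For $k=3$ this finishes the argument, since every factor must be a single class of parallel vectors, forcing $|z_1\ran,|z_2\ran,|z_3\ran$ to be mutually parallel. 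For $k=4$ the only surviving possibility is that in each factor the four directions are pairwise non-parallel, i.e.\ the four product vectors are in general position, and then Proposition \ref{prop1} makes them linearly \emph{independent}.

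The case $k=5$ is the main obstacle. If some four of the five are in general position, they are independent by Proposition \ref{prop1} and span $W=\spa\{|z_1\ran,\dots,|z_5\ran\}$; the remaining fifth vector is then a product vector of $W$ not parallel to those four, so Proposition \ref{lemma_gp} forces $W$, hence $D$, to contain infinitely many product vectors, a contradiction. Thus I may assume no four of the five are in general position. By the size analysis above, in every factor the parallelism classes have size one or two, so the coincidences in each factor form a matching. Choosing a factor that actually contains a size two class and eliminating it with a functional leaves a three term dependence among pairwise non-parallel rank one vectors in $\mathbb C^2\ot\mathbb C^2$; two of these three must share a tensor factor, so they span a coordinate plane $|b\ran\ot\mathbb C^2$ (or $\mathbb C^2\ot|c\ran$), and the third vector, lying in that plane, is forced to share the same factor --- contradicting that it comes from a different parallelism class. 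Running through the finitely many matching patterns (the extreme one being the configuration whose coincidence graph is a five cycle) eliminates every case. The delicate point is precisely that the six product vectors need \emph{not} be in general position --- Example \ref{exam-a} exhibits genuine single factor coincidences --- so one cannot simply invoke Proposition \ref{prop1}, and it is the bookkeeping over the admissible coincidence patterns in this $k=5$ step that must be pushed through by hand.
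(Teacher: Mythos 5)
Your proof is correct, and it takes a genuinely different route from the paper's. The paper proves part (i) by a dichotomy on whether the six product vectors form a generalized unextendible product basis: in the GUPB case (Proposition \ref{6-theorem}) it first shows that any five of the six still form a GUPB via the combinatorial criterion of Proposition \ref{gupb}, and then derives linear independence from Propositions \ref{lemma_gp} and \ref{gp}; in the non-GUPB case it first pins down the parallelism pattern as three disjoint pairs (Proposition \ref{non-gupb}) and then eliminates with functionals. Your minimal-dependence argument bypasses the GUPB machinery entirely: the class-size constraint ($|S|=k$ or $|S|\le k-3$), the exclusion of full classes via the two-qubit fact from \cite{moment}, and the size-two-class elimination treat both the GUPB and the non-GUPB configurations uniformly, relying only on Proposition \ref{prop1}, Proposition \ref{lemma_gp}, and the fiber-line principle, which is exactly the paper's Lemma \ref{lem_pdv}. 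For part (ii), the paper uses Lin Chen's rank argument (the left side of $|z_6\rangle\langle z_6|=\sum_{i=1}^5 a_i|z_i\rangle\langle z_i|$ has rank one while the right side has rank equal to the number of nonvanishing $a_i$), whereas your positive/negative splitting with comparison of ranges is more elementary but longer; both are valid reductions of (ii) to (i). One remark: your closing hedge undersells your own argument. No bookkeeping over matching patterns is actually required in the $k=5$ step: once every parallelism class has size at most two and some factor exhibits a size-two class, the elimination yields a three-term dependence whose vectors are pairwise non-parallel, forced by Proposition \ref{prop1} to share a tensor factor pairwise, and hence forced into a single parallelism class of size three, contradicting the size bound. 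That single contradiction disposes of every coincidence pattern at once, including the five-cycle; your proof as written is already complete.
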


For the proof, we begin with two simple lemmas.

\begin{lemma}\label{lem_pdv}
Suppose that a subspace $D$ of $\mathbb C^{d_1}\otimes \mathbb C^{d_2}\otimes \mathbb C^{d_3}$ has
a finite number of product vectors $|z_i\rangle =|x_{1i},\,x_{2i},\,x_{3i}\rangle$
which are not parallel to each others.
For two product vectors $|z_i\rangle,\, |z_j\rangle $ in $D$, if $|x_{ki}\rangle$ is parallel to $|x_{kj}\rangle$
for some $k\in \{1,2,3\} $, then  $|x_{\ell i}\rangle$ is not parallel to $|x_{\ell j}\rangle $
for each $\ell \in \{1,2,3\}\setminus \{k\}$.
\end{lemma}

\begin{proof}
If not, we may assume that $|x_{ki}\rangle$ is parallel to $|x_{kj}\rangle$ for $k=1,2$. In this case,
 $|x_{3i}\rangle$ is not parallel to $|x_{3j}\rangle$, and so we see that $\beta |z_i\rangle
+\gamma |z_j\rangle =|x_{i1},\,x_{i2}\rangle \otimes
(\beta |x_{3i}\rangle+\gamma |x_{3j}\rangle)$ is a product vector in
$D$ for any complex numbers $\beta,\, \gamma$. This is a
contradiction.
\end{proof}

\begin{lemma}\label{lll}
Three product vectors $\{|x_i, y_i\rangle:i=1,2,3\}$ in $\mathbb C^2\ot\mathbb C^2$ must be linearly independent,
whenever either $\spa\{|x_1\rangle,|x_2\rangle,|x_3\rangle\}$ or $\spa\{|y_1\rangle,|y_2\rangle,|y_3\rangle\}$ is of two dimensional.
\end{lemma}
\begin{proof}
If they are in general position, then this is Proposition \ref{prop1}. Suppose that two,
say $|x_1\rangle$ and $|x_2\rangle$, are parallel to each others.
If $\sum_{i=1}^3 a_i|x_i,\,y_i\rangle=0$ then take nonzero vector $|x\rangle$ so that $\langle x|x_i\rangle=0$ for $i=1,2$.
Now, we consider $\sum_{i=1}^3 a_i\langle x|x_i\rangle |y_i\rangle =0$ to get
$a_3=0$.
\end{proof}

The following proposition proves the statement (i) of Theorem~\ref{main-th} in the case
when six product vectors form a generalized unextendible product basis, that is,
the orthogonal complement $D^\perp$ of $D$ has no product vectors.

\begin{proposition}\label{6-theorem}
Suppose that a $5$-dimensional subspace $D$ of $\mathbb
C^2\ot\mathbb C^2\ot\mathbb C^2$ has exactly six product vectors $\{|z_i\rangle=|x_{1i},\,x_{2i},\, x_{3i}\rangle:i=1,2,\dots,6\}$
which form a generalized unextendible product basis. Then
any five product vectors among six product vectors are linearly independent.
\end{proposition}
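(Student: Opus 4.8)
The plan is to prove the equivalent statement that no sub-collection of the six product vectors of size at most five is linearly dependent. Since a set of five vectors is dependent exactly when it contains a minimal dependent subset (a circuit), it suffices to show that every circuit among $\{|z_1\rangle,\dots,|z_6\rangle\}$ has at least six members, and I would rule out sizes $2,3,4,5$ in turn using Lemma~\ref{lem_pdv}, Lemma~\ref{lll}, and the hypothesis that $D$ contains \emph{exactly} six product vectors. A circuit of size two would consist of two parallel vectors, impossible since the six are pairwise non-parallel. For a circuit $\{|z_i\rangle,|z_j\rangle,|z_k\rangle\}$ of size three one has $|z_i\rangle=\alpha|z_j\rangle+\beta|z_k\rangle$ with $\alpha,\beta\neq 0$ and $|z_i\rangle$ a product vector not parallel to the others; comparing ranks across the bipartition $\mathbb C^2\ot(\mathbb C^2\ot\mathbb C^2)$ shows that such a sum is a product vector only when $|z_j\rangle$ and $|z_k\rangle$ are parallel in exactly two of the three components, contradicting Lemma~\ref{lem_pdv}.

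For a circuit of size four, $\sum_{i=1}^4 c_i|z_i\rangle=0$ with all $c_i\neq 0$, I would contract the first qubit against $\langle\omega|$ orthogonal to $|x_{11}\rangle$, killing the first term, and let $k$ be the number of surviving terms (those $i\in\{2,3,4\}$ with $|x_{1i}\rangle$ not parallel to $|x_{11}\rangle$). If $k=1$ a single nonzero vector equals $0$, impossible; if $k=2$ two product vectors in $\mathbb C^2\ot\mathbb C^2$ become parallel; if $k=3$ Lemma~\ref{lll} forces their second and third components each to span only a line. In the cases $k=2,3$ some pair among the six is then parallel in two components, contradicting Lemma~\ref{lem_pdv}. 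The case $k=0$ means all four first components coincide in direction, so the relation collapses to $\sum_{i=1}^4 c_i|x_{2i},x_{3i}\rangle=0$, a circuit of four product vectors that are in general position in $\mathbb C^2\ot\mathbb C^2$ by Lemma~\ref{lem_pdv}; they span a three dimensional hyperplane $W\subset\mathbb C^4$, whose product vectors form a conic, so $|x_{11}\rangle\ot W\subset D$ would contain infinitely many product vectors, contradicting finiteness.

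The size five circuit is the crux. Running the same contraction (now killing one of the five) yields a contradiction whenever some pair shares a parallel component: either via Lemmas~\ref{lem_pdv} and~\ref{lll} as above, or, when all five share a component $|x\rangle$, because then $E:=\spa\{|z_i\rangle\}\subset|x\rangle\ot\mathbb C^4$ already contains infinitely many product vectors. Hence I may assume the five are in general position. Contracting the first qubit against a \emph{generic} $\langle\omega|$ gives $\sum_{i=1}^5 c_i\langle\omega|x_{1i}\rangle\,|x_{2i},x_{3i}\rangle=0$; as $\omega$ ranges over $\mathbb C^2$ the coefficient vectors $(c_i\langle\omega|x_{1i}\rangle)_i$ fill a two dimensional space of relations among the five vectors $|x_{2i},x_{3i}\rangle\in\mathbb C^2\ot\mathbb C^2$ (using that the $|x_{1i}\rangle$ span $\mathbb C^2$), forcing them to span only a three dimensional hyperplane $W_{23}$. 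Thus the five points $|x_{2i},x_{3i}\rangle$ lie on the conic $C$ of product vectors of $W_{23}$ inside $\mathbb P(W_{23})=\mathbb P^2$. On the other hand, the condition that $|p\rangle\in\mathbb P(W_{23})$ admit $|a\rangle$ with $|a\rangle\ot|p\rangle\in E$ is the vanishing of a $2\times 2$ determinant whose entries are linear in $p$, hence a second conic $Q'$ containing all five points. Since $|C\cap Q'|\ge 5>4$, B\'ezout forces $C$ and $Q'$ to share a component, so infinitely many product vectors $|a\rangle\ot|p\rangle$ lie in $E\subset D$, again contradicting finiteness.

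The main obstacle is this last step. The two delicate points are (i) the generic-covector computation showing that the two qubit marginal degenerates to three dimensions, which is exactly what places the relevant data on a plane conic, and (ii) converting the five forced incidences into a one-parameter family of product vectors via the B\'ezout bound on two conics. The assumption of exactly six product vectors is precisely what is contradicted once such a family appears, while pairwise non-parallelism and Lemma~\ref{lem_pdv} drive the small cases and the reduction to general position. The generalized unextendible product basis hypothesis fixes the configuration under analysis; I expect to invoke it, if at all, only to guarantee that the relevant spans are nondegenerate (so that the conic $C$ is genuine), rather than to carry the main count.
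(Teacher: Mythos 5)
Your proposal is correct in substance, but it takes a genuinely different route from the paper's proof. The paper first shows that any five of the six vectors again form a generalized unextendible product basis --- a combinatorial verification of the partition criterion of Proposition \ref{gupb}, using the same Lemmas \ref{lem_pdv} and \ref{lll} you invoke --- and then gets linear independence almost for free from its four-vector machinery: a dependent five-tuple would span a four dimensional space $E$ containing an extra product vector but only finitely many, so four spanning product vectors are not in general position by Proposition \ref{lemma_gp}, whence $E^\perp$ contains a product vector by Proposition \ref{gp}, contradicting the GUPB property just established. You instead run a circuit elimination: circuits of size $2,3,4$ fall to contraction arguments plus Lemmas \ref{lem_pdv} and \ref{lll}, and the size-$5$ case is settled by a projective count --- dependence forces the $(2,3)$-marginal span $W_{23}$ to be three dimensional, the five marginal points then lie on both the Segre conic $C$ and the incidence conic $Q'$ (a $2\times 2$ determinant, which is the right size precisely because $E$ has codimension two in $\mathbb{C}^2\otimes W_{23}$), and B\'ezout forces a common component, hence infinitely many product vectors in $D$. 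What your route buys: you never use the GUPB hypothesis, so your argument proves statement (i) of Theorem \ref{main-th} in one stroke, subsuming the paper's separate treatment of the non-GUPB case (Proposition \ref{non-gupb}); the cost is importing B\'ezout and plane-conic geometry where the paper stays with elementary linear algebra and its already-proved four-vector propositions. Two small points to patch in a write-up: (a) dispose of the degenerate case where the determinant defining $Q'$ vanishes identically on $W_{23}$ --- then every point of $C$ lifts to a product vector in $E$, giving the contradiction even more directly; and (b) justify that $C$ is a genuine conic, i.e.\ not all of $\mathbb{P}(W_{23})$, which holds because the Segre quadric in $\mathbb{P}^3$ contains lines but no projective planes. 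Neither point needs the GUPB hypothesis, confirming your closing remark that it plays no essential role in your argument.
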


\begin{proof}
We first show that any five of them also form a generalized unextendible product basis.
By Proposition \ref{gupb}, it is enough to prove the following:
\begin{enumerate}
\item[{\rm (i)}]
If $S$ is a subset of $\{1,2,\dots,6\}$ with $|S|=3$ then $\{|x_{ji}\rangle: i\in S\}$ spans $\mathbb C^2$ for each $j=1,2,3$.
\item[{\rm (ii)}]
If $S_1$ and $S_2$ are disjoint subsets of $\{1,2,\dots,6\}$ with $|S_1|=|S_2|=2$ then $\{|x_{ji}\rangle: i\in S_j\}$ spans $\mathbb C^2$
for some $j=1,2$.
\end{enumerate}
For the proof of (i),
we first note that no four vectors among $\{|x_{1i}\rangle:1,2,\dots,6\}$ can be parallel to each others by Proposition \ref{gupb}.
Suppose that three vectors, say $\{|x_{11}\rangle,\,|x_{12}\rangle,\, |x_{13}\rangle\}$, are parallel to each others.
In this case, $\{|x_{2i},\, x_{3i}\rangle:i=4,5,6\}$ in $\mathbb C^2\ot\mathbb C^2$ satisfies the assumption of Lemma \ref{lll}
by Proposition \ref{gupb} again. We also note that $\{|x_{2i},\,x_{3i}\rangle:i=1,2,3\}$ is in general position in
$\mathbb C^2\otimes \mathbb C^2$ by Lemma~\ref{lem_pdv}. Consequently, both $\{|z_i\rangle:i=1,2,3\}$ and $\{|z_i\rangle:i=4,5,6\}$ span
$3$-dimensional subspaces of $\mathbb C^2\otimes \mathbb C^2 \otimes \mathbb C^2$, respectively.
We denote by $D_1$ and $D_2$ the spans of
$\{|z_j\rangle: j=1,2,3\}$ and $\{|z_j\rangle:j=4,5,6\}$, respectively. If there exists a vector in $D_1\cap D_2$, then we have
$$
\sum_{j=1}^3 a_i |x_{1i},\,x_{2i},\,x_{3i}\rangle = \sum_{j=4}^6 a_i |x_{1i},\,x_{2i},\,x_{3i}\rangle.
$$
We take $|x_1\rangle$ orthogonal to $|x_{1i}\rangle$ for $i=1,2,3$, to get
$$
\sum_{j=4}^6 a_i \langle x_1|x_{1i}\rangle |x_{2i}\ot x_{3i}\rangle=0.
$$
Since $\{|x_{2i},\,x_{3i}\rangle:i=4,5,6\}$ in $\mathbb C^2\ot\mathbb C^2$ is linearly independent by Lemma \ref{lll}
and $\langle x_1|x_{1i}\rangle\neq0$, we have
$a_4=a_5=a_6=0$. This tells us that two space $D_1$ and $D_2$ have no nonzero intersection, and $D$ is of six dimensional.

To prove (ii), it suffices to consider the case when any three vectors among
$\{|x_{ji}\rangle:i=1,2,\dots,6\}$ span $\mathbb C^2$  for each $j=1,2,3$  by the above result (i).
Assume that there exist disjoint subsets, say $\{1,2\}$ and $\{3,4\}$ such that both $\{|x_{11}\rangle,\,|x_{12}\rangle\}$
and $\{|x_{23}\rangle,\,|x_{24}\rangle\}$ span
one dimensional spaces. In this case, we see that $\{|x_{35}\rangle,\,|x_{36}\rangle\}$ is linearly independent in
$\mathbb C^2$ by Proposition \ref{gupb}. Then, it is easy to see that the vectors $|z_1\rangle,\,|z_2\rangle,\,|z_3\rangle,\,|z_4\rangle$
are linearly independent by Lemma \ref{lem_pdv}.
Suppose that
$$
\sum_{i=1}^4 a_i |x_{1i},\, x_{2i},\, x_{3i}\rangle=\sum_{i=5}^6 a_i |x_{1i},\, x_{2i},\, x_{3i}\rangle.
$$
We take two vectors $|x_1\rangle$ and $|x_2\rangle$, which span the orthogonal complements of $|x_{11}\rangle$ and
$|x_{23}\rangle$, respectively.
Applying $\langle x_1,\, x_2|$ to both sides of the above equation,
we have
$\sum_{i=5}^6 a_i \langle x_1|x_{1i}\rangle \langle x_2|x_{2i}\rangle |x_{3i}\rangle=0$.
Since $\langle x_1|x_{1i}\rangle \langle x_2|x_{2i}\rangle \neq 0$ for $i=5,6$ by the above result (i), we see that $a_5=a_6=0$.
This shows that the span of six product vectors is of six dimensional.
Therefore, we have shown that any five product vectors also form
a generalized unextendible product basis.

We proceed to show that they are linearly independent.
Suppose that they are linearly dependent, and so they span a four dimensional subspace $E$. Take four vectors which span $E$. Then
$E$ has another extra product vector, but does not have infinitely many product vectors.
This shows that these four product vectors are not in general position
by Proposition \ref{lemma_gp}, and $E^\perp$ has a product vector by Proposition \ref{gp}.
This tells us that the five vectors do not form a generalized unextendible product basis.
\end{proof}

Now, we consider the case when $E:=\{|z_i\rangle=|x_{1i},\,
x_{2i},\, x_{3i}\rangle:i=1,2,\ldots,6\}\subset D$ does not form a
generalized unextendible product basis, that is, $E^{\perp}$ has a
product vector. To begin with, we recall \cite{moment} that if a
subspace of $\mathbb C^2\ot\mathbb C^2$ has three product vectors
then it must have infinitely many product vectors. If $E^\perp$ has
a product vector then there is a partition $I_1\cup I_2\cup I_3$ of
$\{1,2,\dots,6\}$ so that $\{|x_{ji}\rangle:i\in I_j\}$ spans one
dimensional space for each $j=1,2,3$ by Proposition \ref{gupb}.
Suppose that $|I_j|\ge 3$ for some $j=1,2,3$. Then by the result
\cite{moment} mentioned above, we see that $D$ has infinitely
product vectors. Therefore, we conclude that if $D$ satisfies the
condition of Theorem \ref{main-th} then $|I_j|=2$ for each
$j=1,2,3$. Therefore, the following proposition completes the proof of the
statement (i) of Theorem \ref{main-th}.

\begin{proposition}\label{non-gupb}
Suppose that six product vectors $\{|z_i\rangle=|x_{1i},\,x_{2i},\, x_{3i}\rangle:i=1,2,\dots,6\}$
are given with the following properties:
\begin{itemize}
\item[{\rm (i)}] $|x_{1i}\rangle$ is parallel to $|x_{11}\rangle$ if and only if $i=2$.
\item[{\rm (ii)}] $|x_{2i}\rangle$ is parallel to $|x_{23}\rangle$ if and only if $i=4$.
\item[{\rm (iii)}] $|x_{3i}\rangle$ is parallel to $|x_{35}\rangle$ if and only if $i=6$.
\end{itemize}
If the span of these six product vectors have finitely many product vectors, then any five of six product vectors are linearly independent.
\end{proposition}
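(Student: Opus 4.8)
The plan is to reduce everything to a single statement about the coefficients of an arbitrary linear relation: \emph{every nontrivial relation $\sum_{i=1}^6 c_i|z_i\rangle=0$ has all six coefficients nonzero}. Granting this, the conclusion is immediate. Indeed, if some five of the vectors — say all but $|z_j\rangle$ — were linearly dependent, the resulting relation would be a nontrivial relation among all six in which $c_j=0$, contradicting full support; hence any five are linearly independent. (As a byproduct this also forces the space of relations to be at most one–dimensional, consistent with the five–dimensional span of Theorem~\ref{main-th}, but that fact is not needed.)

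To control a relation I would knock out each of the three parallel pairs in turn. For the pair $\{1,2\}$, choose $\langle a|$ on the first factor with $\langle a|x_{11}\rangle=0$; since $x_{11}$ is parallel to $x_{12}$ this also annihilates $|z_2\rangle$, while $\langle a|x_{1i}\rangle\neq0$ for $i=3,4,5,6$ by hypothesis~(i). Applying $\langle a|\otimes\id\otimes\id$ leaves a relation $\sum_{i=3}^6 c_i'\,|x_{2i},x_{3i}\rangle=0$ in $\mathbb C^2\ot\mathbb C^2$ with $c_i'=c_i\langle a|x_{1i}\rangle$, so $c_i'\neq0$ exactly when $c_i\neq0$. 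These four product vectors have one parallel pair in the first slot ($x_{23}$ parallel to $x_{24}$) and one in the second slot ($x_{35}$ parallel to $x_{36}$); Lemma~\ref{lem_pdv}, which is where the finiteness hypothesis enters, forces $x_{33}$ not parallel to $x_{34}$ and $x_{25}$ not parallel to $x_{26}$, while (ii) and (iii) give that $x_{25},x_{26}$ are not parallel to $x_{23}$ and $x_{33},x_{34}$ are not parallel to $x_{35}$.

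The crux is then a short computation on these four vectors. After absorbing the proportionality scalars so that $x_{23}=x_{24}=:v$ and $x_{35}=x_{36}=:w$, the relation reads $|v\rangle\ot(c_3'x_{33}+c_4'x_{34})+(c_5'x_{25}+c_6'x_{26})\ot|w\rangle=0$. Equating these two tensors forces $c_3'x_{33}+c_4'x_{34}$ to be parallel to $w$ and $c_5'x_{25}+c_6'x_{26}$ to be parallel to $v$. Since $\{x_{33},x_{34}\}$ and $\{x_{25},x_{26}\}$ are bases of $\mathbb C^2$ and $w,v$ are parallel to none of these four vectors, expanding $w$ in $\{x_{33},x_{34}\}$ and $v$ in $\{x_{25},x_{26}\}$ produces only nonzero coordinates; hence the solution space is one–dimensional and its generator has $c_3',c_4',c_5',c_6'$ all nonzero. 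Translating back, $(c_3,c_4,c_5,c_6)$ is either the zero vector or has all entries nonzero. Running the identical argument on the pair $\{3,4\}$ — knocked out by a functional on the second factor orthogonal to $x_{23}$ — and on the pair $\{5,6\}$ — knocked out on the third factor orthogonal to $x_{35}$ — yields the same dichotomy for $(c_1,c_2,c_5,c_6)$ and for $(c_1,c_2,c_3,c_4)$.

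Finally I would combine the three dichotomies. Given a nontrivial relation, if $c_5$ or $c_6$ is nonzero then the dichotomies for $(c_3,c_4,c_5,c_6)$ and $(c_1,c_2,c_5,c_6)$ force all of $c_1,\dots,c_6$ to be nonzero; whereas if $c_5=c_6=0$, those same two dichotomies force $c_3=c_4=0$ and $c_1=c_2=0$, making the relation trivial — a contradiction. Thus every nontrivial relation has full support, which is what we reduced to. The one genuinely computational point, and the place to be careful, is the $\mathbb C^2\ot\mathbb C^2$ step: one must check that the mixed tensor equation admits only the one–dimensional, fully supported family of solutions. Everything else is bookkeeping of the parallelism pattern, fed by Lemma~\ref{lem_pdv} and the hypotheses (i)--(iii).
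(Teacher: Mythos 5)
Your proposal is correct. It uses the same basic toolkit as the paper's proof --- functionals orthogonal to $|x_{11}\rangle$, $|x_{23}\rangle$, $|x_{35}\rangle$ to annihilate the parallel pairs, Lemma \ref{lem_pdv} (which is indeed where the finiteness hypothesis enters) for non-parallelism of the complementary components, and hypotheses (i)--(iii) for the nonvanishing of the surviving inner products --- but it organizes them along a genuinely different route. The paper reduces, tacitly by the symmetry of the parallelism pattern, to the single subset $\{|z_1\rangle,\dots,|z_5\rangle\}$ and eliminates coefficients sequentially: contracting the first two slots simultaneously with $\langle x_1|\ot\langle x_2|$, where $|x_1\rangle\perp|x_{11}\rangle$ and $|x_2\rangle\perp|x_{23}\rangle$, isolates $a_5$ and forces $a_5=0$; a second contraction with $\langle x_{11}|\ot\langle x_2|$ then kills $a_1,a_2$, after which $a_3=a_4=0$ follows. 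You never contract two slots at once; instead you prove the stronger statement that every nontrivial linear relation among all six vectors has full support, deriving from each single-slot contraction an all-or-nothing dichotomy via the rank-one tensor equation $|v\rangle\ot|u\rangle+|s\rangle\ot|w\rangle=0$ in $\mathbb C^2\ot\mathbb C^2$, and then combining two of the three dichotomies (your third dichotomy, for $(c_1,c_2,c_3,c_4)$, is never used in the final step and could be omitted). The paper's route buys brevity; yours dispenses with the implicit WLOG reduction, treats all six five-element subsets at once, and yields the byproduct that the space of relations among the six vectors is at most one-dimensional with a fully supported generator, which is consistent with (and slightly sharpens) the way the relation $|z_6\rangle=\sum_{i=1}^5a_i|z_i\rangle$ is used in Section 5. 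One caveat common to both arguments: they implicitly assume the six product vectors are pairwise non-parallel --- otherwise the conclusion is false and Lemma \ref{lem_pdv} does not apply --- which is harmless in context, since the proposition is applied to the six mutually distinct product vectors of a five-dimensional subspace.
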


\begin{proof}
It suffices to consider the  five product vectors $|z_i\rangle$ with $i=1,2,\ldots,5$.
Suppose that $\sum_{i=1}^5 a_i |x_{1i},\,x_{2i},\, x_{3i}\rangle=0$.
We take $|x_1\rangle$ and $|x_2\rangle$ orthogonal to $|x_{11}\rangle$ and $|x_{23}\rangle$, respectively.
Then we have
$$
0=\sum_{i=1}^5 a_i \langle x_1|x_{1i}\rangle \langle x_2|x_{2i}\rangle |x_{3i}\rangle=
a_5 \langle x_1|x_{15}\rangle \langle x_2|x_{25}\rangle |x_{35}\rangle,
$$
from which we have $a_5=0$, and $\sum_{i=1}^4 a_i |x_{1i},\,x_{2i},\, x_{3i}\rangle=0$. Then we have
$$
a_1 \langle x_{11}|x_{11}\rangle \langle x_2|x_{21}\rangle |x_{31}\rangle
+a_2 \langle x_{11}|x_{11}\rangle \langle x_2|x_{22}\rangle |x_{32}\rangle=0.
$$
We note that there are infinitely many product vectors if $|x_{31}\rangle$ and $|x_{32}\rangle$ are linearly dependent,
and so $a_1=a_2=0$. From this we also have $a_3=a_4=0$.
\end{proof}

The following proposition now completes the proof of Theorem \ref{main-th}.
We include here Lin Chen's argument \cite{chen-pri} which is much simpler than our original proof.
We are grateful to him for informing this and allowing us to put it here.

\begin{proposition}
Suppose that five of six product vectors in $\mathbb C^2\ot\mathbb C^2\ot\mathbb C^2$
are linearly independent then the corresponding six pure product states are linearly independent.
\end{proposition}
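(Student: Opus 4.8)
The plan is to reduce to the case where the six product vectors span a five-dimensional space and then test a rank-one matrix identity against a dual basis. Write the six (pairwise non-parallel) product vectors as $|z_1\rangle,\dots,|z_6\rangle$ and relabel so that $|z_1\rangle,\dots,|z_5\rangle$ are the five that are linearly independent; set $D=\spa\{|z_1\rangle,\dots,|z_5\rangle\}$. If $|z_6\rangle\notin D$ then all six product vectors are linearly independent, and the six pure product states are linearly independent by the elementary fact, already used in Section~3, that pure product states attached to linearly independent product vectors are linearly independent. So the substantive case is $|z_6\rangle\in D$, and I would write $|z_6\rangle=\sum_{i=1}^5\lambda_i|z_i\rangle$.

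Next, suppose a dependency $\sum_{i=1}^6 c_i|z_i\rangle\langle z_i|=0$ holds. Since the five states $|z_i\rangle\langle z_i|$ with $1\le i\le 5$ are already linearly independent by the same fact, it suffices to prove $c_6=0$, for then $c_1=\dots=c_5=0$ is immediate. Assuming $c_6\neq 0$, I would rewrite the relation as the matrix identity $|z_6\rangle\langle z_6|=\sum_{i=1}^5 d_i|z_i\rangle\langle z_i|$ with $d_i=-c_i/c_6$.

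The key step is to choose dual vectors $|w_1\rangle,\dots,|w_5\rangle$ with $\langle z_i|w_j\rangle=\delta_{ij}$ (these exist because $|z_1\rangle,\dots,|z_5\rangle$ are linearly independent), and to apply the linear functional $X\mapsto\langle w_j|X|w_k\rangle$ to both sides for index pairs $j\neq k$. The right-hand side collapses to $\sum_i d_i\,\delta_{ij}\delta_{ik}=0$, while the left-hand side equals $\mu_j\overline{\mu_k}$, where $\mu_j:=\langle w_j|z_6\rangle$. Hence $\mu_j\overline{\mu_k}=0$ for all $j\neq k$, which forces at most one of $\mu_1,\dots,\mu_5$ to be nonzero. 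Expanding $|z_6\rangle=\sum_i\lambda_i|z_i\rangle$ and using $\langle z_i|w_j\rangle=\delta_{ij}$ gives $\mu_j=\lambda_j$, so $|z_6\rangle=\sum_j\mu_j|z_j\rangle$ is a scalar multiple of a single $|z_{j_0}\rangle$. This contradicts the pairwise non-parallelism of the six product vectors, so $c_6=0$, and the proof is complete.

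The only genuine obstacle is that one cannot isolate a single coefficient by annihilating the other product vectors: because $|z_6\rangle$ lies in the span of the remaining five, any vector orthogonal to those five is automatically orthogonal to $|z_6\rangle$, so the familiar ``annihilate all but one'' device is unavailable here. The dual-vector test circumvents this by reading off the off-diagonal quantities $\mu_j\overline{\mu_k}$; their vanishing for $j\neq k$ says precisely that $|z_6\rangle$ has at most one nonzero coordinate in the basis $\{|z_1\rangle,\dots,|z_5\rangle\}$, i.e. that the rank-one matrix $|z_6\rangle\langle z_6|$ can be a combination of the other rank-one states only when $|z_6\rangle$ is parallel to one of them. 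No reality assumption on the $c_i$ is needed, since the evaluation of $\langle w_j|\cdot|w_k\rangle$ is valid for complex coefficients, and every remaining step is routine.
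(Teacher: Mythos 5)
Your proof is correct, and it reaches the paper's conclusion by a genuinely self-contained route. Both arguments are proofs by contradiction with the same reduction: since pure product states of linearly independent product vectors are linearly independent, the coefficient of $|z_6\rangle\langle z_6|$ in any dependence relation cannot vanish, so one may write $|z_6\rangle\langle z_6|=\sum_{i=1}^5 d_i|z_i\rangle\langle z_i|$, and both proofs end by forcing $|z_6\rangle$ to be parallel to a single $|z_{j_0}\rangle$, contradicting the pairwise non-parallelism of the six product vectors. The difference lies in how that forcing is achieved. The paper's proof (due to Lin Chen) is a rank count: the left-hand side has rank one, while the right-hand side has rank equal to the number of nonvanishing $d_i$ because $|z_1\rangle,\dots,|z_5\rangle$ are linearly independent, so exactly one $d_i$ is nonzero. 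You instead evaluate both sides at pairs of dual vectors $|w_j\rangle$, $|w_k\rangle$ with $\langle z_i|w_j\rangle=\delta_{ij}$, obtaining $\mu_j\overline{\mu_k}=0$ for $j\neq k$ with $\mu_j=\langle w_j|z_6\rangle$, so at most one coordinate of $|z_6\rangle$ in the basis $\{|z_i\rangle\}$ survives. Your dual-basis computation is in effect a hands-on proof of the rank fact that the paper invokes without justification, so your argument is longer but more elementary and fully explicit (including the correct remark that no reality assumption on the coefficients is needed). One small streamlining: your initial case split on whether $|z_6\rangle$ lies in the span $D$ of the other five is unnecessary, since if $|z_6\rangle\notin D$ the identity $|z_6\rangle\langle z_6|=\sum_{i=1}^5 d_i|z_i\rangle\langle z_i|$ is already impossible by comparing ranges; the paper's rank argument likewise needs no such split.
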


\begin{proof}
We may assume that five vectors $|z_1\rangle,\ldots,|z_5\rangle$ are linealy independent. Suppose that
$|z_1\rangle \langle z_1|,\ldots,|z_6\rangle \langle z_6|$ are linearly dependent, then we have
$$
|z_6\rangle \langle z_6|=\sum_{i=1}^5 a_i |z_i\rangle \langle z_i|.
$$
The left hand side of the above equation has rank one, while the right hand side
has rank equal to the number of nonvanishing $a_i$ because $\{|z_i\rangle : i=1,2,\ldots,5\}$ is linearly independent.
So the only possibility is that exactly one $a_i$ is nonzero. It implies $|z_6\rangle$ and $|z_i\rangle$ are parallel,
which gives us a contradiction.
\end{proof}

Therefore, we get a simplicial face which is affinely isomorphic to the five dimensional simplex $\Delta_5$
with six vertices. A maximal face of $\Delta_5$ has five vertices whose corresponding product vectors
are linearly independent. Therefore, we can follow the argument in \cite{ha-kye-sep-face} to get a
PPT edge state of rank four as it will be explained in the next section.

Six product vector satisfying the condition of Proposition \ref{non-gupb} may span six or five dimensional spaces.
As for the case when they span the five dimensional space, we consider the following example, which shows that
six product vectors may make a simplicial face even though they do not form a generalized unextendible product basis.

\begin{example}\label{vec_ex}
Consider the following six product vectors given by
$$
\begin{aligned}
|z_1\rangle &=|e_2\rangle \otimes \bigl( |e_1\rangle+ 2|e_2\rangle\bigr)\otimes |e_1\rangle, \\
|z_2\rangle &=|e_2\rangle \otimes \bigl( |e_1\rangle+|e_2\rangle \bigr)\otimes \bigl(|e_1\rangle +|e_2\rangle \bigr),\\
|z_3\rangle &=|e_1\rangle \otimes |e_2\rangle \otimes \bigl( |e_1\rangle -|e_2\rangle\bigr),\\
|z_4\rangle &=\bigl(|e_1\rangle +|e_2\rangle \bigr)\otimes |e_2\rangle \otimes \bigl(|e_1\rangle +|e_2\rangle \bigr),\\
|z_5\rangle &=\bigl( |e_1\rangle +2|e_2\rangle\bigr)\otimes |e_1\rangle \otimes |e_2\rangle,\\
|z_6\rangle &=\bigl( |e_1\rangle +|e_2\rangle \bigr)\otimes \bigl(|e_1\rangle -2|e_2\rangle\bigr)\otimes |e_2\rangle.
\end{aligned}
$$
Then we have
$|z_1\rangle+|z_3\rangle+|z_5\rangle=|z_2\rangle+|z_4\rangle+|z_6\rangle$,
and so they span the five dimensional space $D$. These six product
vectors do not form a generalized unextendible product basis, since
$D^\perp$ has a product vector $|e_1\rangle\ot |e_1\rangle\ot
|e_1\rangle$. In fact, this is the only product vector in
$D^{\perp}$.  It is easy to check that $D$ has only these six
product vectors up to scalar multiplications.
\end{example}

\section{Construction of three qubit PPT entanglement of rank four}

In this section, we construct three qubit PPT entangled states of rank four.
We begin with the description of facial structures of the convex set $\mathbb T$ consisting of all PPT states
acting on the Hilbert space ${\mathcal H}=\mathbb C^{d_1}\ot \mathbb C^{d_2}\ot\cdots\ot\mathbb C^{d_n}$.
For a given subset $S$ of $\{1,2,\dots,n\}$, we can define the linear map  $T(S)$
from $\bigotimes_{j=1}^n M_{d_j}$ into itself by
$$
(A_1\ot A_2\ot\cdots\ot A_n)^{T(S)}:=B_1\ot B_2\ot\cdots\ot B_n,
\quad \text{\rm with}\ B_j=\begin{cases} A_j^\ttt, &j\in S,\\ A_j,
&j\notin S,\end{cases}
$$
where $A^\ttt$ denotes the transpose of the matrix $A$. A state $\varrho$ is
said to be of positive partial transpose (PPT) if $\varrho^{T(S)}$ is positive for every subset
$S$. The PPT criterion \cite{choi-ppt,peres} tells us that a
separable state must be of PPT. The convex set $\mathbb
T$ is the intersection of convex sets
$$
\mathbb T^S=\{\varrho: \varrho^{T(S)} \ {\text{\rm is positive}}\}
$$ through subsets $S$ of
$\{1,2,\dots,n\}$. Since $\varrho^{T(S)}$ is positive if and only if
$\varrho^{T(S^c)}$ is positive with the complement $S^c$ of $S$ in
$\{1,2,\dots,n\}$, the convex set $\mathbb T$ is actually
intersection of $2^{n-1}$ convex sets $\mathbb T^S$, whose faces are
determined by subspaces of the Hilbert space $\mathcal H$. From
this, it is easy to describe their facial structures for the convex
set $\mathbb T$. See \cite{ha-kye-2} for the bi-partite case of
$n=2$. For a product vector $\ket z=\ket{x_1}\ot\cdots\ot\ket{x_n}$,
we also define the partial conjugate $\ket z^{\Gamma(S)}$ by
$$
(\ket{x_1}\ot\cdots\ot\ket{x_n})^{\Gamma(S)}
:=\ket{y_1}\ot\cdots\ot\ket{y_n}, \quad \text{\rm with}\
\ket{y_j}=\begin{cases} \ket{\bar{x}_j}, &j\in S,\\
\ket{x_j}, &j\notin S,\end{cases}
$$
where $|\bar x\rangle$ denotes the conjugate of $|x\rangle$.
We will abuse notations $T(i)$ and $\Gamma(i)$ for $T(\{i\})$ and $\Gamma(\{i\})$, respectively.

We will restrict ourselves to the three qubit case with $\mathcal H=\mathbb C^2\ot\mathbb C^2\ot\mathbb C^2$. For a
given quartet $(D_0,D_1,D_2,D_3)$ of subspaces of $\mathcal H$, the convex set
$$
\bigcap_{i=0}^3\{\varrho\in\mathbb T: {\mathcal R}\varrho^{T(i)}\subset D_i\}
$$
is a face of $\mathbb T$ unless it is empty, and every face of $\mathbb T$ is in this form,
where $\varrho^{T(0)}$ denotes $\varrho^{T(\emptyset)}=\varrho$ for notational convenience and ${\mathcal R}\varrho$ means the
range of $\varrho$. It is very difficult to determine if the above set is nonempty or not.
Many authors have been trying to classify PPT states by possible combinations of ranks of $\varrho^{T(i)}$.
See \cite{abls,bdmsst,gar,karnas}, for example. See also \cite{aug_n_qubit,johnston,tura} for higher qubit cases.
Most important cases are PPT entangled edge states, that is, PPT states with no product vectors in their ranges.
Such states in two qutrit case have been completely classified in \cite{kye-prod-vec,kye_osaka} by their ranks.
It is still open for $2\ot 4$ case. See \cite{kye_ritsu}. We refer to \cite{juhan}
for recent progress.

PPT states of rank four are of special interest since four is the lowest rank for PPTES. Structures of such
PPT states have been studied extensively in \cite{rank4}. Especially, if $\varrho$ is a three qubit PPT state of rank four then
$\varrho^{T(i)}$ is also of rank four for each $i=1,2,3$. Furthermore, it is separable if and only if the range has a
product vector. Therefore, any three qubit PPT entangled states of rank four must be edge states.

From now on, we suppose that six product vectors $\{|z_i\rangle:i=1,2,\dots,6\}$ in $\mathbb C^2\ot\mathbb C^2\ot\mathbb C^2$
span the $5$-dimensional subspace $D$.
We also assume that $\{|z_i\rangle^{\Gamma(j)}:i=1,2,\dots 6\}$ also spans a five dimensional space for each $j=1,2,3$.
This is the case if all of the entries are real numbers, as in the examples in Section \ref{gp_gupb} and Section \ref{sec-3-qubit}.
Therefore, we have a simplicial face $\Delta_5$ with six extreme points $|z_i\rangle\langle z_i|$.
We write
\begin{equation}\label{coeff}
|z_6\rangle=\sum_{i=1}^5 a_i|z_i\rangle
\end{equation}
with scalars $a_i$, and take positive numbers $p_i$ with $\sum_{i=1}^5p_i=1$. Consider
\begin{equation}\label{edge-4}
\varrho_t=(1-t)|z_6\rangle\langle z_6|+t\sum_{i=1}^5 p_i|z_i\rangle\langle z_i|,
\end{equation}
for real number $t$. We note that $\varrho_t$ is an interior point of the convex set $\Delta_5$ for $0<t<1$, and
$\varrho_1$ is an interior point of a maximal face $\Delta_4$ of $\Delta_5$. Because $\varrho_t$ is of rank five for $0<t\le 1$,
there exists $t>1$ such that $\varrho_t$ is still positive. This is the case for $\varrho_t^{T(j)}$ for each $j=1,2,3$.
Take the largest number $\lambda>1$ such that $\varrho_\lambda$ is of PPT. Then there exists
$j=0,1,2,3$ such that the rank of $\varrho_\lambda^{T(j)}$ is strictly less than five. For such $j=0,1,2,3$,
we see that $\varrho_\lambda^{T(j)}$ is entangled, since its range space has
only six product vectors $|w_i\rangle:=|z_i\rangle ^{\Gamma(j)}$ and the simplicial face
with six extreme points $|w_i\rangle \langle w_i|=\bigl (|z_i\rangle \langle z_i|\bigr )^{T(j)}$ does not contain $\varrho_\lambda^{T(j)}$.
By the results in \cite{rank4} mentioned just above, we see that $\varrho_\lambda^{T(j)}$ must be of rank four for each
$j=0,1,2,3$. Especially, we see that $\varrho_\lambda$ is of rank four. We proceed to determine this number $\lambda$
in terms of $p_i$ and $a_i$.

\begin{figure}[h!]
\begin{center}
\includegraphics[scale=0.7]{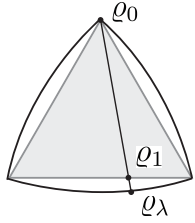}
\end{center}
\caption{The triangle, edges, vertices represent the $5$-simplex, $4$-simplices and $3$-simplices consisting of separable states, respectively.
The boundary points of the round convex body which are not on the vertices represent PPT entangled edge states of rank four. }
\end{figure}

Since the orthogonal complement $D^\perp$ is already in the kernel of
$\varrho_\lambda$ which has the four dimensional kernel, there exists a unique vector $\xi\in D\cap\ker\varrho_\lambda$
up to scalar multiplications. From the relation $\varrho_\lambda|\xi\rangle=0$, we
can express $|z_6\rangle$ as the linear combination of $|z_i\rangle$ with $i=1,2,\dots, 5$. We compare
these coefficients with (\ref{coeff}) to get the relation
$$
(1-t) a_i \langle z_6|\xi\rangle +tp_i \langle z_i|\xi\rangle =0,\qquad i=1,2,3,4,5.
$$
Because $\xi\in D$, we see that five vectors $\{(1-t)\bar a_i |z_6\rangle +tp_i |z_i\rangle: i=1,2,3,4,5\}$ are linearly dependent.
If we use the relation (\ref{coeff}) to express these vectors as linear combinations of $\{|z_i\rangle: i=1,2,3,4,5\}$ and
write $|a\rangle=(a_1,a_2,\ldots,a_5)^{\rm t}$, then
we have the coefficient matrix
$$
(1-\lambda)|a\rangle\langle a|+\lambda {\text{\rm Diag}}\, (p_1,p_2,p_3,p_4,p_5)
$$
which must be singular.
From the condition $\lambda>0$, we see that $\lambda$ satisfies the equation
$$
(1-\lambda)\left(\sum_{i=1}^5\dfrac{|a_i|^2}{p_i}\right) +\lambda=0
$$
with the null vector whose $i$-th entry is given by $\frac{a_i}{p_i}$.
 We note that the coefficients $a_i$'s are in fact already determined by
five product vectors $|z_i\rangle$ with $i=1,2,3,4,5$. Turning back to (\ref{edge-4}), we can express $\varrho_\lambda$ in terms of
$|z_i\rangle$ and $p_i$ for $i=1,2,3,4,5$.

For concrete examples, we consider
normalizations $|\tilde z_i\rangle$'s of six product vectors introduced in Section 2 after Proposition \ref{gp}.
For each $p=(p_1,p_2,p_3,p_4,p_5)$ with $\sum_{i=1}^5 p_i=1$ and $p_i>0$,
our construction gives us rank four PPT entangled states $\varrho_{p}$ by
\begin{equation}\label{pptes_ex}
\varrho_p=
\frac{1}{\alpha-1}\biggl(\alpha \sum_{i=1}^5 p_i|\tilde z_i\rangle \langle \tilde z_i|-
|\tilde z_6\rangle \langle \tilde z_6| \biggr),
\end{equation}
where $\alpha
=\frac{2}{9p_1}+\frac{2}{9p_2}+\frac{2}{9p_3}+\frac{8}{81p_4}+\frac{125}{81p_5}$.
We can describe four vectors spanning the kernel of $\varrho_p$ in terms of $p_i$'s, and
check the existence of product vectors in the kernel of $\varrho_p$.
See Appendix E in \cite{gar} for a general method of finding product vectors.
In contrast to PPT entangled states  constructed from unextendible product basis \cite{bdmsst},
the kernel of $\varrho_p$ contains no product vector in most cases of $p$ including
$p=(\frac 15,\frac 15,\frac 15,\frac 15,\frac 15)$.

For another examples, we take $|\tilde z_i\rangle$ as normalizations of six product vectors in Example~\ref{vec_ex},
to get get rank four PPT entangled states as in \eqref{pptes_ex} with
$\alpha=\frac 1{2p_1}+\frac 2{5p_2}+\frac 1{5p_3}+\frac 2{5p_4}+\frac 1{2p_5}$.
In this case, the kernel of $\varrho_p$ contains no product vector for any $p=(p_1,p_2,p_3,p_4,p_5)$ with $\sum_{i=1}^5 p_i=1$ and $p_i>0$.


In our construction, one crucial condition is that partial conjugates of six product vectors must
span five dimensional spaces. If one kind of partial conjugates of six product vectors are linearly independent, then
we cannot obtain PPT entangled states by the above method. For example, we consider the following six product vectors:
$$
\begin{aligned}
|w_1\rangle &=\bigl( |e_1\rangle+|e_2\rangle \bigr) \otimes \bigl( |e_1\rangle+|e_2\rangle \bigr) \otimes \bigl( |e_1\rangle+|e_2\rangle \bigr),\\
|w_2\rangle &=\bigl( |e_1\rangle+i|e_2\rangle \bigr) \otimes \bigl( |e_1\rangle-|e_2\rangle \bigr) \otimes \bigl( |e_1\rangle-i|e_2\rangle \bigr),\\
|w_3\rangle &=\bigl( |e_1\rangle-|e_2\rangle \bigr) \otimes \bigl( |e_1\rangle+|e_2\rangle \bigr) \otimes \bigl( |e_1\rangle-|e_2\rangle \bigr),\\
|w_4\rangle &=\bigl( |e_1\rangle-i|e_2\rangle \bigr) \otimes \bigl( |e_1\rangle-|e_2\rangle \bigr) \otimes \bigl( |e_1\rangle+i|e_2\rangle \bigr),\\
|w_5\rangle &=|e_1\rangle \otimes |e_1\rangle \otimes |e_1\rangle,\\
|w_6\rangle &=|e_2\rangle \otimes |e_1\rangle \otimes |e_2\rangle,
\end{aligned}
$$
which span the five dimensional subspace with completely entangled orthogonal complement.
We note that their partial conjugates span six dimensional space.


\section{higher qubit cases and discussion}

It is an interesting question to ask to what extent our approach
works. To get separable states with unique decomposition, we have
exploited the fact that generic $s_{\max} +1$ dimensional subspaces have finitely many product vectors
which give rise to linearly independent product states, where $s_{\max}$ is given by (\ref{s_max}).
In the $n$ qubit case, generic $2^n-n$ dimensional spaces have $n!$ product states in the $2^{2n}$ dimensional
real vector space
consisting of all $2^n\times 2^n$ self-adjoint matrices.
If $n\ge 9$ then $n!>2^{2n}$, and so the convex hull of these $n!$ pure product states is not a simplex anymore, but a
polytope with $n!$ extreme points. It would be interesting to investigate combinatorial
structures of these polytopes.

By Theorem \ref{gp_simplex}, we see that $n$ qubit separable states with rank $\le n$ have unique decomposition in most cases.
Our next question is what happens for $n$ qubit separable states of rank $k$ with $n<k<s_{\max}+1$.
We note that Theorem \ref{theorem_gp} gives an answer for three qubit case.

One important application of our results is to construct three qubit PPT entangled states of rank four. This construction
gives us explicit formulae for those entangled states which are not obtained by the construction from unextendible product basis.
It would be very nice to know whether this construction is general enough to get all three qubit PPT entangled states of rank four.


\end{document}